\newcommand {\cD}{{\mathcal{D}}}
\newcommand {\cE}{{\mathcal{E}}}
\newcommand {\cG}{{\mathcal{G}}}
\newcommand {\cT}{{\mathcal{T}}}
\newcommand {\cV}{{\mathcal{V}}}
\newcommand {\bM} {{\mathbbm M}}
\newcommand {\bd} {{\bf d}}
\newcommand {\bo} {{\bf o}}
\newcommand {\bs} {{\bf s}}
\newcommand {\bw} {{\bf w}}
\newcommand {\bmu} {\boldsymbol{\mu}}
\newcommand{\avg}{{\rm avg}}
\newcommand {\Z} {{\mathbbm Z}}
\newcommand {\N} {{\rm I\kern-1.5pt N}}
\newcommand {\R} {{\rm I\kern-2.5pt R}}
\newcommand {\C} {{\rm I\kern-5pt C}}
\newtheorem{theorem}{Theorem}
\newcommand{\beqa}{\begin{eqnarray}}
\newcommand{\eeqa}{\end{eqnarray}}
\newcommand{\beqan}{\begin{eqnarray*}}
\newcommand{\eeqan}{\end{eqnarray*}}
\newcommand{\beq}{\begin{equation}}
\newcommand{\eeq}{\end{equation}}
\newcommand{\bfl}{\begin{flushleft}}
\newcommand{\efl}{\end{flushleft}}
\newcommand{\mydef}{& \hspace{-0.1in} := & \hspace{-0.1in}}
\newcommand{\myeq}{& \hspace{-0.1in} = & \hspace{-0.1in}}
\newcommand{\lb}{\nonumber \\}
\newcommand{\myarr}{\begin{array}{lll}}
\newcommand{\bitem}{\begin{itemize}}
\newcommand{\eitem}{\end{itemize}}
\newcommand{\benum}{\begin{enumerate}}
\newcommand{\eenum}{\end{enumerate}}
\newcommand{\E}[1]{{\mathbbm E}\left[ #1 \right]}
\newcommand{\bP}[1]{{\mathbbm P}\left[ #1 \right]}
\newcommand{\myhb}{\hspace{-0.3in}}
\newcommand{\myhf}{\hspace{0.3in}}
\newcommand{\myskip}{\\ \vspace{-0.1in}}
\newcommand{\ER}{Erd$\ddot{\rm o}$s-R$\acute{\rm e}$nyi }
\def\QED{~\rule[-1pt]{5pt}{5pt}\par\medskip}
\newenvironment{proof}{{\bf Proof: \ }}{ \hfill \QED}
\begin{document}
%
% paper title
% can use linebreaks \\ within to get better formatting as desired
\title{Influence of Clustering on Cascading 
	Failures in Interdependent Systems}

% author names and affiliations
% use a multiple column layout for up to three different
% affiliations

\author{Richard J. La\thanks{This work was 
supported in part by contract
70NANB16H024 from National Institute
of Standards and Technology.} 
\thanks{Author is with the Department of Electrical \& 
Computer Engineering (ECE) and the Institute for Systems 
Research (ISR) at the University of Maryland, College Park.
E-mail: hyongla@umd.edu}
}

\IEEEtitleabstractindextext{%
\begin{abstract}
We study the influence of clustering, more specifically 
triangles, on cascading failures in interdependent 
networks or systems, in which we model the dependence
between comprising systems using a dependence graph. 
First, we propose a new model that
captures how the presence of triangles in the dependence
graph alters the manner 
in which failures transmit from affected systems to others. 
Unlike existing models, the new model allows us to 
approximate the failure propagation dynamics using a 
multi-type branching process, even with triangles. 
Second, making use of the 
model, we provide a simple condition that indicates how 
increasing clustering will affect the likelihood that
a random failure triggers a cascade of failures, which
we call the probability
of cascading failures (PoCF). In particular, our condition
reveals an intriguing observation that the influence of 
clustering on PoCF depends on
the vulnerability of comprising systems to an increasing 
number of failed neighboring systems and the current 
PoCF, starting with different {\em types} of failed 
systems. Our numerical studies hint that increasing
clustering impedes cascading failures under
both (truncated) power law and Poisson degree
distributions. 
Furthermore, our finding suggests that, as the degree distribution 
becomes more concentrated around the mean degree 
with smaller variance, increasing 
clustering will have greater
impact on the PoCF. 
A numerical investigation of networks with
Poisson and power law degree distributions reflects 
this finding and demonstrates that increasing 
clustering reduces the PoCF much faster under Poisson 
degree distributions in comparison to power law degree 
distributions.

\end{abstract}

\begin{IEEEkeywords}
Cascading failures, clustering, interdependent systems, 
transitivity.
\end{IEEEkeywords}}

%\IEEEpeerreviewmaketitle

\maketitle
%\remind{See if we can change the model to 
%consider the conditional degree distributions}

\section{Introduction}
	\label{sec:Introduction}

Many modern systems that provide critical 
services we rely on consist of 
interdependent systems.
Examples include modern information and 
communication networks/systems,  
power systems, manufacturing systems, and 
transportation systems, among others.
In order to deliver their services, the
comprising or component systems (CSes) 
must work together and oftentimes support 
each other.

Growing interdependence among CSes 
also exposes a source of vulnerability:  
the failure of one CS can spread to other 
CSes via dependence because a CS may no
longer be able to function without the 
support from failed CS(es). From 
this viewpoint, it is obvious that the overall
robustness of the system to failures 
will depend on the underlying dependence 
structure among CSes. 
We adopt a {\em dependence graph} to 
capture such interdependence among CSes, 
which we assume is {\em neutral}, i.e., 
there are no degree correlations between
neighbors. 

Despite a growing interest in modeling 
and understanding the robustness of large, 
complex systems
(e.g., \cite{Albert00, Baxter12, 
Buldyrev10, Kenett14, Rosato08, 
Shao11, Son12, Vesp, Zhuang16}),
intricate interdependence among CSes 
makes their analysis 
challenging. Furthermore, to the best of our
knowledge, there is no extensive theory 
or design guidelines that
allow us to answer even seemingly basic 
questions. 

Many of earlier studies that investigated
the cascading behavior of failures 
in interdependent
networks or systems, including our own, 
assumed tree-like propagation of failures 
(e.g., \cite{La_TON16, La_TON17, 
La_TNSE, Watts02, Yagan12}). But, it is well
documented that many real networks exhibit
much higher {\em clustering}, more formally 
known as {\em transitivity}, than classical
random graphs (e.g., \ER random graphs 
or configuration models
\cite{Bollobas, MolloyReed95}). Although how 
clustering is introduced in different real 
networks is still an open question, 
this observation led to new models that 
can generate clustered random graphs, e.g., 
\cite{JacobMorters15, Miller09, Newman09}.

Of particular concern is a widespread 
outbreak of failures among CSes, which 
can compromise the function of
the overall system, thereby risking
a potentially catastrophic system-level 
failure. We refer to {\em a cascade
of failures} or {\em cascading failures} 
as an event in which
the system experiences widespread failures 
well beyond the local neighborhood around
the initial failure and, in the absence of 
remedial actions, the spread of failures
slows down only when newly failed CSes no
longer have other CSes they can cause to fail. 
A key question we are interested in is: 
{\em how does clustering observed in 
real networks change the likelihood that
a random failure of a CS leads to a cascade
of failures in a large system?}

The goal of our study is two-fold: (i) to 
propose a new model that will allow researchers
to borrow existing tools
in order to study the 
effects of clustering or transitivity, and 
(ii) to complement  
existing studies (summarized in 
Section~\ref{sec:Related}) and contribute to the 
emerging theory on complex systems, by examining 
the impact of clustering
on the robustness of the systems with respect to 
random failures of CSes. Our hope is 
that the new findings and insight reported here
will help engineers and researchers better 
understand the influence of critical system 
properties, including clustering, 
and incorporate them into
design guidelines of complex systems. 

In order to answer the aforementioned question of 
interest to us, we develop a new 
model for capturing the influence of clustering
on the likelihood of a random failure triggering
cascading failures in large systems comprising
many CSes.
More precisely, it models the effect of
triangles in a dependence graph on transmission
of failures from affected CSes to their 
neighbors, as did the authors of \cite{Hackett11, 
Miller09}, but in a very different fashion.

Our model differs from existing models
employed to study similar effects: they account 
for the influence of triangles (or other short
cycles)
either by modifying degree distributions to
model the joint distributions of independent
edge degrees and triangle degrees
\cite{Hackett11, Miller09, Zhuang16}
or by replicating nodes to create cliques of size
equal to their degrees
\cite{CoupLelarge1, CoupLelarge2}. 
Instead, our model identifies the scenarios where 
a triangle among three CSes in the dependence 
graph alters the manner in which failures 
propagate locally.
In so doing, it allows us to capture the 
dynamics of failure propagation with 
the presence of triangles. 

A key benefit of our model is that 
it allows us to leverage an extensive set of 
tools available for branching
processes: 
as pointed out in \cite{Miller09}, it was believed
that the presence of short cycles due to high 
clustering renders the theory of 
(multi-type) branching processes
inapplicable. However, we will demonstrate that, by 
keeping track of what we call {\em immediate}
parents and children, we can approximate failure
propagation using a multi-type branching process. 
As a result, we are able to use existing tools to
estimate the likelihood of experiencing 
cascading failures in large systems. This in turn 
allows us to derive a simple, yet intuitive 
condition (Theorem~\ref{thm:1}) 
that tells us how increasing clustering 
would change the likelihood of experiencing
cascading failures. 

Our findings, to some extent, 
corroborate earlier findings obtained using 
different models. Moreover, our main
result (Theorem~\ref{thm:1}) reveals interesting
insight that sheds new light on the complicated
relation between system parameters and the
influence of clustering.  

Although our study is carried out in the framework 
of propagating failures in interdependent systems, 
we suspect that the basic 
model and approach as well as key findings are
applicable to other applications with suitable 
changes. These applications include (i) information 
or rumor propagation or new technology adoption via
social networks, (ii) an epidemic of disease in 
a society (e.g., cities or countries), 
and (iii) spread of malware in the Internet.

\subsection{A summary of main contributions}
	\label{subsec:Contributions}
	
The main contributions of our study can be 
summarized as follows.
\myskip

\noindent {\bf F1.} We propose a new model that 
captures the influence of triangles 
in dependence graphs
on failure spreading dynamics. Unlike
existing models (e.g., \cite{CoupLelarge1, 
CoupLelarge2, Miller09, Newman09}), 
we explicitly model the manner in which a
triangle alters how 
failure transmits between CSes. 
Therefore, it enables us to study the effect
of clustering in dependence graphs on 
failure propagation dynamics and the 
vulnerability of the system to cascading 
failures, which we measure using the probability 
of cascading failures (PoCF). 

As mentioned earlier, the model allows us to 
borrow a rich set of existing tools
by approximating failure
propagations using a multi-type branching 
process~\cite{Harris}. Moreover, it separates 
the influence of clustering on PoCFs from that 
of degree correlations, which is a problem  
observed with some existing models
\cite{Hackett11, Miller09}.  

\noindent {\bf F2.} Our main finding 
(Theorem~\ref{thm:1}) illustrates
that the influence of clustering on PoCF
is rather complicated in that it depends
on the ratio of PoCFs and infection 
probabilities of different {\em types}
of CSes, which will be defined precisely
in Section~\ref{sec:Types+IP}. 
However, there exists a simple condition 
that tells us whether
higher clustering facilitates or
impedes cascading failures. 

\noindent {\bf F3.} Numerical studies reveal
that increasing clustering tends to impede 
cascades of failures, rendering the 
system more robust to random failures. 
In addition, clustering has greater 
influence on PoCFs when the degree distribution 
in the dependence graph 
is more concentrated around the mean with smaller
variance. In particular, our 
study indicates that the PoCF decreases more 
rapidly with increasing clustering under Poisson 
degree distributions in comparison to power
law degree distributions. We offer an 
intuitive explanation
for this observation on the basis of our model
and main finding (Theorem~\ref{thm:1}). 
\myskip

A few words on notation: throughout the paper, we will
use boldface letters or symbols to denote (row) 
vectors or vector functions.\footnote{All vectors 
are assumed to be row vectors.}  For instance, $\bd$ 
denotes a vector, and the $j$-th element of $\bd$ 
is denoted by $d_j$. 
Vector ${\bf 1}$ represents the 
vector of ones of an appropriate dimension. 
The set $\Z_+$ (resp. $\N$) denotes the set of 
nonnegative integers $\{0, 1, 2, \ldots\}$ (resp. 
positive integers $\{1, 2, 3, \ldots\}$).  Finally, 
all vector inequalities are assumed componentwise.

The rest of the paper is organized as follows:
Section~\ref{sec:Related} summarizes existing 
studies that are most closely related to our
study. Section~\ref{sec:Model} delineates the
dependence graphs and infection graphs we 
use to model the propagation of failures
throughout the system, followed by a more
detailed description of the transmission of
failures among CSes and the influence of 
triangles in Section~\ref{sec:Types+IP}. 
Section~\ref{sec:MTBP} outlines the multi-type 
branching process we employ to approximate 
failure propagations and PoCFs in large 
systems. Sections~\ref{sec:Main} and
\ref{sec:Numerical} present
our main analytical finding and numerical 
studies, respectively. We provide the 
proof of the main finding in Section
\ref{appen:thm1} and then conclude in 
Section~\ref{sec:Conclusion}.

\section{Related Literature}
	\label{sec:Related}
	
There is already extensive literature on the topic
of epidemics, information propagation and 
cascading failures
(e.g., \cite{Albert00, Blume11, Boguna03b, 
Buldyrev10, MolloyReed95, Watts02}), which cuts across
multiple disciplines (e.g.,
epidemiology~\cite{CardyGrassberger85, Grassberger83, 
Pastor05, Schneider11}, 
finance~\cite{Caccioli12, Caccioli14}, 
social networks~\cite{Hu14, Moharrami}, and
technological networks~\cite{Cohen1, Cohen2,
La_TON16}). 
Given the large
volume of literature, it would be an unwise
exercise to attempt to provide a summary of
all related studies. Furthermore, although
the topic of clustering has seen a renewed 
interest recently, 
especially in the context of social networks and 
technological networks and its role in failure 
spreading, 
it has been studied in the past
in different settings, including population 
biology and epidemiology~\cite{Eames, Keeling99}. 
For these reasons, here we only 
summarize the most relevant studies that deal 
with the effects of clustering on cascading
behavior in networks. To improve readability,
even though there are some differences
in the employed models, we use the term 
`cascade' synonymously with `epidemic' and
`contagion' in the remainder of this section
because the studied events or phenomena
are similar.

As stated in Section~\ref{sec:Introduction}, 
earlier (classical) random graph
models proved to be unsuitable for describing 
many real networks; as the network
sizes increase, they typically lead to tree-like 
structures and fail to reproduce some salient
features of real networks, including the 
presence of short cycles which leads to higher
clustering. 
In order to address the shortcomings of the
random graph models, Miller~\cite{Miller09}
and Newman~\cite{Newman09} independently
proposed a new random graph model by extending
the classical configuration model, which can 
produce clustered networks with tunable
parameters: unlike in the classical  
configuration model where only a degree sequence
or probability is specified, the new model
specifies the joint probability $p_{st}(k_s, 
k_t)$ of two degrees 
-- independent edge degree $k_s$
and triangle degree $k_t$; an independent edge 
of a node is an edge with a neighbor which is 
not a part of a triangle. Since a triangle has 
two incident edges on the node, the total 
degree of a node with degrees $(k_s, k_t)$ is
$k_s + 2 \times k_t$.

Utilizing this new random graph model,
Miller~\cite{Miller09}
investigated the impact of clustering with 
respect to both cascade size and threshold. His
study shows that, although clustered networks
could exhibit a smaller cascade threshold, 
compared to a configuration model with an 
identical degree distribution, 
this is caused by the degree correlations, 
also known as assortativity or 
assortative mixing, 
introduced in the process of generating 
clustered networks. When networks with
similar degree correlations are compared, 
clustering
raises the cascade threshold and diminishes
the cascade size.  

In another study using the same model, Hackett
et. al~\cite{Hackett11} examined the influence
of clustering on the expected cascade size in 
both site and bond percolation as well as 
the Watts' random threshold model in $z$-regular
graphs. They showed that clustering reduces the
cascade size in the case of bond and site 
percolation. 
For the Watts' model in the $z$-regular graphs, 
their finding suggests that the impact of 
clustering depends on the value of $z$. 

In another line of interesting studies, 
Coupechoux and Lelarge proposed a new random graph
model where some nodes are replaced by cliques
of size equal to their degrees~\cite{CoupLelarge1, 
CoupLelarge2}. Making use of this new model, they 
examined the influence of clustering in social 
networks on diffusion and contagion in (random) 
networks. Their key findings include the observation 
that, in the contagion model with symmetric 
thresholds, the effects of clustering on cascade 
threshold depend on the mean node degree; for small 
mean degrees, clustering impedes cascades, whereas 
for large mean degrees, cascades are facilitated 
by clustering.

Although their contagion model is similar to our model, 
there is a key difference; while replicating the nodes 
to generate cliques in the random graphs, their model 
assumes that the contagion thresholds of all cloned 
nodes in a clique are identical. In our model, however, 
the nodes forming triangles can have different
thresholds, which we feel is more realistic in many
cases. As we will show, this seemingly innocuous 
difference has a significant impact on the effects of
clustering on cascading behavior. 

In~\cite{Zhuang16}, 
Zhuang and Ya$\breve{\rm g}$an extended the
model of Miller~\cite{Miller09} and
Newman~\cite{Newman09} to 
study information propagation in a multiplex
network with two layers representing 
an online social network (OSN) 
and a physical network, both
of which have high clustering. Only a subset of
vertices in the physical network are assumed to 
be active in the OSN.
Their key findings are: (a) clustering consistently 
impedes cascades of information to a large number of
nodes with respect to both the critical threshold
of information cascade and the mean size of
cascades; and (b) information transmissibility 
(i.e., average probability of information 
transmission over a link) has significant impact;
when the transmissibility is low, 
it is easier to trigger a cascade of information 
propagation with a smaller, densely connected 
OSN than with a large, loosely connected OSN. 
However, when the transmissibility is high, the 
opposite is true. 

In another study closely related to 
\cite{Zhuang16}, Zhuang et. al \cite{Zhuang17}
investigated the impact of the presence
of triangles on cascading behavior in multiplex 
networks, using a content-dependent linear 
threshold model. Their numerical studies show 
that the impact of increasing clustering depends 
on the mean degrees; when the mean degrees are
small, increasing clustering makes cascades
less likely. However, beyond some threshold
on the mean degrees, it has the opposite
effect. 

There are several major differences between 
these studies and ours. For instance,  
unlike these existing studies that primarily 
focused on (expected) cascade size or 
threshold, we focus on the likelihood that a 
random failure in the system will cause 
a cascade of failures in the system. 
Moreover, our model and main finding 
(Theorem~\ref{thm:1}) together offer some key
insights that are hard to obtain using the 
previous models and shed some light on how 
the degree distributions change the 
effects of clustering.

\section{System Model}
\label{sec:Model}

Let $\cV$ be the set of CSes. We model the 
interdependence among CSes using an undirected 
dependence graph $\cG = (\cV, \cE)$: the vertex set 
$\cV$ consists of the CSes, 
and undirected edges in $\cE$ between 
vertices indicate {\em mutual} 
dependence relations between the 
end vertices.\footnote{These dependence 
relations are not necessarily
the physical links in a network. For
example, in a power system, an overload failure
in one part of power grid can cause a failure
in another part that is not geographically close
or without direct physical connection to the former.} 
Two CSes with an undirected edge between them are 
said to be (dependence) neighbors. 

An undirected edge $e \in \cE$ in the dependence 
graph should be interpreted as a pair of directed 
edges pointing in the opposite directions. A directed
edge from a CS to another CS means that the latter 
can fail as a result of the failure of the former. 
In this sense, we say that the first CS {\em supports}
the latter CS. 
When we need to refer to directed edges in the 
dependence graph, we shall call it the 
directed dependence
graph (in order to distinguish it from the undirected
dependence graph or, simply, the dependence graph). 

For notational convenience, we often denote a generic
CS in the system by $a$ or $a'$ 
throughout the paper.

\subsection{Degree distributions
and clustering coefficient}

We denote the degree distribution (or probability 
mass function) of the dependence graph
by ${\bf p}_0 := (p_0(d); \ d \in \N)$, 
where $p_0(d)$ is
the fraction of CSes with $d$ neighbors. Thus, 
we assume that when we select a CS randomly, 
its degree can be modeled using a random variable 
(RV) with distribution ${\bf p}_0$.  

In our study, we are interested in modeling the 
effects of clustering in the dependence graph on 
the robustness of the system. We adopt the 
{\em clustering coefficient} in order to measure the 
level of clustering in the dependence graph
\cite{NewmanSIAM}: define a connected
triple to be a vertex with edges to two other
neighboring vertices. Then, the clustering 
coefficient is given by 
\beqa
C
\mydef \frac{\mbox{3 $\times$ \# of triangles 
	in the graph}}
	{\mbox{\# of connected triples in the graph}}.
	\label{eq:CC} 
\eeqa
In other words, the clustering coefficient 
tells us the fraction of connected 
triples that have an edge between the neighbors.
It is clear that this coefficient will be 
strictly positive if there exists at least 
one triangle in the graph.  
Throughout the study, we denote the clustering 
coefficient by $C \in (0, 1)$. 

As shown in \cite{NewmanSIAM}, the
clustering coefficient 
of real networks can be much larger than that
of random graphs. However, for most real networks, 
it does not exceed 0.2. In particular, all 
information and technological networks, which are
the networks of primary interest to us, have 
a clustering coefficient less than or equal to 
0.13. For this reason, we focus on scenarios 
where the clustering coefficient is not large
but not negligible.

\subsection{Propagation of failures}
	\label{subsec:Prop}

To study the robustness of a system to
failures, we need to model how a failure spreads 
from one CS to another. Here, we describe 
the model we employ to approximate the
dynamics of failure propagation between CSes.

We model failure propagation with the
help of a function 
$\wp_f: \N^2 \to [0, 1]$:
for fixed $d \in \N$ and $1 \leq n_f \leq d$, 
$\wp_f(d, n_f)$ tells us the probability that 
a CS with a degree $d$ will fail as well after 
$n_f$ of its $d$ neighbors 
fail.\footnote{This function 
was called the influence response function 
in \cite{Hackett11, WattsDodds07}.} Although 
it is not necessary, throughout the paper, 
we assume $\wp_f(d,d) = 1$ for all $d \in \N$, 
i.e., a CS fails when all of 
its supporting neighbors
fail, because it will be isolated from 
the rest of the system. 

An example that fits this model is the 
random threshold model used by Watts in
\cite{Watts02}, which is also used
in other studies (e.g., \cite{Blume11, 
Brummitt12}). In the Watts' model, 
every CS $a$ in $\cV$ has 
an intrinsic value $\xi_a$. These values
$\{ \xi_a; \ a \in \cV\}$ of CSes 
are modeled using mutually independent, 
continuous RVs 
with a common distribution $F$. We refer to
$\xi_a$ as the security 
state of CS $a$. 

In his model, a CS $a$ fails 
as a consequence of the failures
of its neighbors when the fraction of its 
failed neighbors exceeds its 
security state $\xi_a$. Therefore, 
for a given pair $(d, n_f)$ with $n_f \leq d$,  
$\wp_f(d, n_f)$ is equal to $\bP{\xi_a < 
n_f / d} = F(n_f / d)$.

\subsection{Infection graphs with triangles}

For our study, we focus on scenarios where each 
failed CS, on the average, affects only a small 
number of neighbors.\footnote{When each failed 
CS causes many other neighbors to fail as well, 
cascading failures are likely and should happen 
often. This may indicate that
the system is poorly designed. Instead,
we are interested in more realistic scenarios of 
interest in which cascading failures are possible
and do occur, but not too frequently.} Furthermore, 
we are primarily interested in scenarios in which, 
even when cascading failures occur, the fraction
of affected CSes is relatively small.
To make this more precise, we introduce {\em infection 
graphs}: starting with an initial failure 
of a CS in the system, the infection
graph consists 
of all failed CSes and the directed 
edges used to contribute to the failures of 
neighbors. In other words, a directed edge 
from CS $a$ to CS $a'$ in the directed 
dependence graph belongs to the 
infection graph if and only if CS $a$
fails before CS $a'$ does. If we replace the
directed edges in the infection graph with
undirected edges, we call it the undirected
infection graph. 

Unlike many earlier studies that assumed a
tree-like infection graph with no cycles, e.g., 
\cite{Brummitt12, La_TON16, La_TNSE,
 Watts02, Yagan12}, 
we allow for the existence of triangles in the
undirected infection graph. But, we assume
that the triangles are not very common in the
infection graph because the fraction
of affected CSes is small. For the same reason, 
we do not consider other larger cliques consisting of
more than three CSes in the infection graph, for 
such larger cliques would appear much less
frequently~\cite{Janson}.

\section{Agent Types and Neighbor Infection 
	Probabilities}
	\label{sec:Types+IP}

As mentioned in Section~\ref{sec:Introduction}, we
are interested in scenarios where the number of 
CSes in the system is large (at least in the 
order of thousands). As we 
will explain shortly, in a large system,  
the propagation of failures can be approximated
with the help of a multi-type branching process
under some simplifying assumptions. 
To be more precise, we shall borrow from the theory
of branching processes with finitely many types
in order to study the likelihood of a single 
initial failure leading to cascading
failures.

\subsection{Agent types}

In our model, depending on how a CS fails, 
there are three possible types we consider 
for the failed CS. More precisely, a CS
that experiences a random failure without any
failed neighbor is of type 0. For each $t \in 
\{1, 2\} =: \cT$, a failed CS is of type $t$ if
its failure was caused by those of $t$ 
failed neighbors. 
The CS(es) whose failures lead to the failure
of another CS, say $a$, are called the
{\em parent(s)} of $a$, and CS $a$ is 
called their {\em child}. Also, borrowing
from the language of epidemiology, we say that
the parent(s) {\em infected} the child. 

Obviously, we implicitly assume that every 
failed CS has no more than two parents. 
In general, it is possible that some high
degree CSes experience a failure
after more than two of their neighbors
fail. But, this would be uncommon
when the fraction of failed CSes is 
small, which is the scenario we focus on 
in this study. 

We shall discuss the distribution of the number of 
children of two different types which are infected
by a failed CS of type $t \in \{0, 1, 2\}$ shortly. 
To explain
these children (vector) distributions, we first 
need to describe how we approximate the probability 
with which a neighbor of failed CS(es) becomes
infected.

\subsection{Triangles in infection graphs}
	\label{subsec:TriangleIG}

In order to motivate the model we employ to 
approximate the propagation of failures, 
we begin with an 
illustrating example shown
in Fig.~\ref{fig:type}. In the figure, dotted
lines indicate undirected edges in the 
dependence graph, and solid directed arrows
represent the contribution to the failures
of children by failed CSes, i.e., parents.

\begin{figure}[h]
\centerline{
	\includegraphics[width=3.3in]{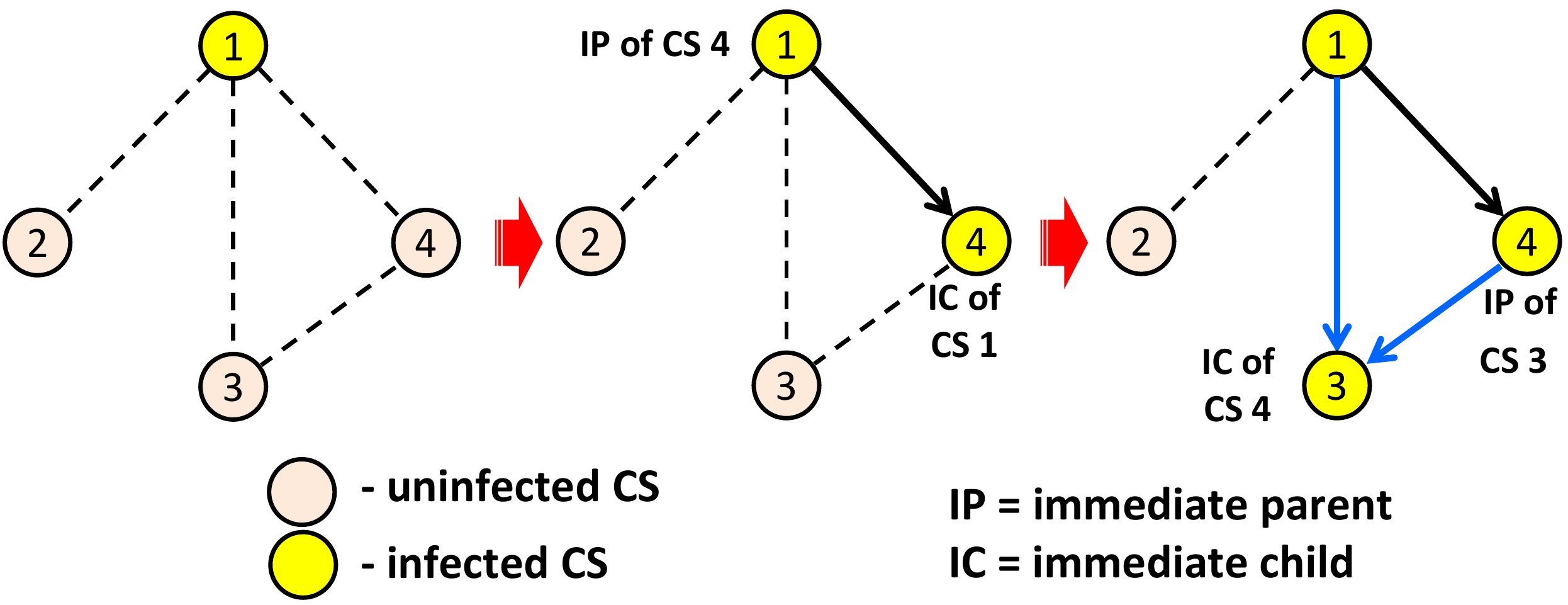}
}
\caption{Example of type 1 and 2 CSes.}
\label{fig:type}
\end{figure}

Initially, CS 1 is the 
lone failed CS and, as a result, CSes
2 through 4 face the possibility of
infection by CS 1. 
Suppose that CS 4 becomes infected 
while CSes 2 and 3 remain unaffected.
Since CS 4 has a single parent, namely
CS 1, it is an example of type 1 CS.
In this case, 
we call CS 4 an {\em immediate
child} of CS 1, and refer to CS 1 as
the {\em immediate parent} of CS 4. 

On the other hand, even though CS 3 first 
survives the failure of CS 1, once CS 4 
fails following its infection by CS 1, the
failures of CSes 1 and 4, both of which
are neighbors of CS 3, lead to that
of CS 3. In this case, because it 
takes the failures 
of two neighbors to cause that of CS 3, 
it is an example of type 2 CS. 
We call CS 1 and CS 4 the {\em first parent}
and the {\em immediate parent}, respectively, 
of CS 3. Analogously, we call CS 3 an 
immediate child of CS 4 (but not 
of CS 1). It will be clear why we call 
CS 1 and CS 4 the immediate parent of CS 
4 and CS 3, respectively, when we describe
the multi-type branching process to 
approximate failure propagation in 
Section~\ref{sec:MTBP}. 

Before we proceed, in order to explain  
the role of triangles in failure propagation
dynamics, 
we first describe several scenarios where
the presence of triangles in the dependence
graph does not affect the failure 
propagation dynamics. This will help us to 
isolate the cases of interest to us, in which 
the existence of triangles alters the 
way failures propagate. 

{\bf Sc1.} In the example shown in 
Fig.~\ref{fig:type}, we assumed that CS 1 is 
a parent of CS 4. Suppose instead that 
CSes 1 and 4 are not neighbors, but there are
additional failed CS(es) between them forming a 
path from CS 1 to CS 4 in the infection graph. 
In this case, since CSes 1 and 4 are 
not neighbors, CSes 1, 3 and 4 do not form a 
triangle. Thus, whether or not we model triangles 
in the dependence graph would not affect the 
transmission of failure to CS 3.

{\bf Sc2.} Even though in Fig.
\ref{fig:type} we assumed that CS 4 was
infected by CS 1 alone, it is also possible
that CS 4 has additional 
parent(s) that contributed to its failure.
From the viewpoint of modeling the influence 
of the triangle (amongst CSes 1, 3, and 4) 
on potential infection of CS 3 by CSes 1 and 
4, however, whether or not CS 4 has 
additional parent(s) is unimportant.

{\bf Sc3.} According to the way we
defined our infection graphs, three failed CSes 
that form a triangle in the dependence
graph may not form a triangle in the 
undirected infection graph. The reason for this
is that, when two neighboring CSes fail 
nearly simultaneously in such a way that the
failure of one does not contribute to that of
the other, there is no edge in the infection 
graph. Roughly speaking, there are two different
ways in which this can happen. 
\benum
\item[C1.] One CS fails first and infects the 
other two neighbors that fail almost
simultaneously, so that the latter two do not 
contribute to the failure of each other. 

\item[C2.] Two of the CSes first fail nearly
at the same time in a way that the failure of one
does not contribute to that of the other. These
two failed CSes then cause the third CS to fail. 
\eenum

In case C1, the existence of the triangle in the 
dependence graph among the three CSes does not 
play any role in the propagation of failures
and the infection of the latter 
two CSes would behave the same way even if 
they were not neighbors with each other. 
Similarly, in case C2, 
the existence of a triangle, more specifically
that of an edge between the first two failing
CSes, is irrelevant to the infection of 
the third CS. 
\myskip

These observations suggest that, from the viewpoint 
of modeling the effects of triangles in failure 
propagation, the only case in which the presence
of a triangle among three CSes matters for
transmitting failures is when the three CSes fail
one after another and each failed CS contributes
to ensuing failures of other CS(es).  
In other words, the first CS to fail 
contributes to the failure of the second
CS in the triangle, and the failures
of the first two CSes subsequently 
cause that of the third CS. This 
is precisely what our model captures 
as explained in the subsequent section.

Recall from scenario {\bf Sc1} that two
failed neighbors of an uninfected CS
might not be neighbors with each other. 
Therefore, in general 
the parents of a type 2 CS
need not be neighbors. When this is true,
however, the three CSes do not form a 
triangle and whether or not we model triangles
would not have any significant impact on our
study. Furthermore, if only a 
small fraction of CSes become infected
as we assumed earlier,
for most CSes with small to moderate 
degrees, the likelihood that they will
have two or more failed neighbors
that are not neighbors with each other 
would be small.
For these reasons, we do not 
consider or model the scenarios where
the two parents of a type 2 CS are
not neighbors. Put differently, we 
only model type 2 CSes whose parents
are neighbors and the three CSes
form a triangle as illustrated in 
Fig.~\ref{fig:type}, in order to examine
the effects of triangles in the
dependence graph. 

However, it should be evident that our
model can be extended to consider other
features, including larger 
cycles, by introducing additional
types of CSes necessary to capture
their presence. Obviously, this is likely
to increase its complexity and degrade
its tractability.

\subsection{Infection probability of neighbors}
	\label{subsec:InfProb}
	
For $t \in \cT$, we denote by $q_t$
the probability that a CS (without knowing
its degree) will fail following 
the $t$-th failure among the neighbors
and become a type $t$ CS. We shall refer
to $q_t$ as the {\em infection probability}
of type $t$ CS. Keep in mind that these
infection probabilities $q_t$, $t \in \cT$, 
do not depend on clustering coefficient
as we explain below. 
 
$\bullet$ {\bf Computation of $q_1$: } 
The infection probability $q_1$ is the 
probability with which CS 4 (or CS 2 or 3) 
is infected by CS 1 as the sole failed neighbor 
in Fig.~\ref{fig:type} without knowing the
degree of CS 4. 
According to the model outlined in 
Section~\ref{subsec:Prop}, a CS of degree $d$
with only a single failed neighbor will fail 
with probability $\wp_f(d, 1)$. Therefore, by 
conditioning on the degree of the neighbor, 
we can obtain the infection probability $q_1$ 
as
\beqa
q_1
\mydef \sum_{d \in \N} w_1(d) \ \wp_f(d, 1),  
	\label{eq:q1}
\eeqa 
where 
\beqa
w_1(d) := \frac{d \cdot p_0(d)}{d_{\avg}}
	\ \mbox{ for all } \ d \in \N,
	\label{eq:w1}
\eeqa
and
$d_{\avg} := \sum_{d \in \N} d \cdot p_0(d)$ is the
average degree of CSes. 

The reason that the degree distribution of a neighbor
used in (\ref{eq:q1}) is given by $\bw_1 := (w_1(d); 
\ d \in \N)$, as opposed to ${\bf p}_0$, was first
discussed in \cite{Callaway}: the degree 
distribution of the neighbor is the 
conditional degree distribution given that it is 
a neighbor of the failed CS
attempting to infect it. The probability 
that a neighbor of the failed CS has degree $d$ is 
proportional to 
$d$ and, hence, is approximately $w_1(d)$ 
under the assumption that the dependence graph
is neutral.

$\bullet$ {\bf Computation of $q_2$: }
The infection probability $q_2$ can be computed
in an analogous manner. Since $q_2$ is the
probability that a CS fails following
the second failure among neighbors and we assume
that the two failed neighbors form a triangle with 
the CS (from the previous subsection), this 
is the probability with which CS 3 fails in the
example of Fig.~\ref{fig:type}, following the 
failure of CS 4 after having survived the failure of
CS 1 (without knowing the degree of CS 3). 
First, since the CS has two failed neighbors, 
clearly its degree is at 
least two. Second, it must have survived the first 
failure of a neighbor before succumbing to the 
second failure among neighbors.

Because the degree of such a CS cannot be one, the 
conditional degree distribution is given by 
${\bf w}_2 := (w_2(d); \ d \in \N)$, where 
\beqa
w_2(1) = 0 \ \mbox{ and } \  
w_2(d) = \frac{w_1(d)}{1 - w_1(1)} \ 
	\mbox{ for } \ d \in \N \setminus \{1\}. 
	\label{eq:w2}
\eeqa 
Taking these
observations into account and conditioning
on the degree of CS, we obtain
\beqa
q_2
\mydef \sum_{d \in \N} w_2(d) (1 - \wp_f(d,1)) 
	\tilde{\wp}_f(d,2),
	\label{eq:q2}
\eeqa
where $\tilde{\wp}_f(d,2)$ is the probability that 
a CS of degree $d$ will fail after 
the second failure among neighbors, conditional on 
that it survived the first failure of a neighbor. 
Using the definition of the conditional probability, 
we get
\beqa
\tilde{\wp}_f(d, 2)
\mydef \frac{\wp_f(d, 2) - \wp_f(d, 1)}{1 - \wp_f(d,1)},
	\ d \in \N \setminus \{1\}. 
	\label{eq:twp}
\eeqa
Substituting (\ref{eq:w2}) and (\ref{eq:twp}) in 
(\ref{eq:q2}), we get
\beqa
q_2 
\myeq \frac{1}{1-w_1(1)} \sum_{d=2}^{\infty} 
	w_1(d) \big( \wp_f(d,2) - \wp_f(d,1) \big). 
	\label{eq:q2-2}
\eeqa

Let us explain the conditional probability 
$\tilde{\wp}_f(d, 2)$ given in (\ref{eq:twp}) using
the Watts' model. As explained earlier, a CS
fails when the fraction of failed neighbors exceeds
its security state $\xi$ and $\wp_f(d,n_f)
= F(n_f/d)$. Therefore, given that a CS of degree
$d$ survived the failure of a single neighbor, the 
conditional probability that it will fail when 
a second neighbor fails is equal to 
\beqan
\tilde{\wp}_f(d,2)
\myeq  \bP{\xi < \frac{2}{d} \ \Big| 
	\ \xi \geq \frac{1}{d}}
= \frac{ \bP{ \frac{1}{d} \leq \xi < \frac{2}{d}} }
	{ \bP{ \frac{1}{d} \leq \xi } } \lb 
\myeq \frac{ \wp_f(d, 2) - \wp_f(d, 1) }{1 - \wp_f(d, 1)}. 
\eeqan

From the degree distributions defined 
in (\ref{eq:w1}) and (\ref{eq:w2}) and 
used to compute the infection probabilities 
in (\ref{eq:q1}) and (\ref{eq:q2}), 
respectively, it is clear
that our model implicitly assumes a neutral
dependence graph and does not model 
any potential degree
correlations between CSes. It has been observed
(e.g., \cite{Hackett11, Miller09})
that one of challenges to studying the effects
of clustering is that, many existing models
for generating clustered networks unintentionally
introduce degree correlations or assortativity.
From the construction of cliques via replication 
of nodes with identical degrees, it is obvious
that similar degree correlations are introduced 
in the model proposed in \cite{CoupLelarge1, 
CoupLelarge2}. 

In~\cite{Miller09}, Miller points out that  
the influence of such degree correlations 
caused some researchers
to incorrectly conclude that clustering 
facilitates cascades of failures with respect
to cascade sizes and cascade threshold; he
demonstrates that, when compared against 
unclustered networks with similar degree 
correlations,
clustered networks have smaller cascade sizes
and higher cascade threshold. In other words, 
clustering has the opposite effects and 
impedes cascades of failures.

Our model does away with this problem by 
modeling the influence of clustering on 
transmission of failures, rather than 
trying to generate random graphs (of finite
size) explicitly. For this reason, any 
qualitative changes in the robustness of
systems we identify using the model can 
be attributed to 
clustering, without having to worry about the
influence of degree correlations.

\section{Multi-type Branching Process for Modeling
	the Spread of Failures} 
	\label{sec:MTBP}
	
We approximate the propagation of failures 
among CSes using a multi-type branching 
process (MTBP), in which
the members of the MTBP are of the types
defined in the previous section. 
We estimate the PoCF of a large system,
using the probability that the population 
in the MTBP does not die out. 

This approach is consistent with our notion of 
cascading failures provided in Section
\ref{sec:Introduction}: when cascading failures 
occur in a large system, they will likely 
spread to different parts of
the system well beyond the local neighborhood
near the initial random failure and 
slow down only when
newly failed CSes do not have neighbors
to infect any more. Thus, in the case of
an infinite network where the number of
CSes in the system is unbounded 
(as assumed in many studies, e.g., 
\cite{Hackett11, La_TNSE, Miller09, 
Zhuang17}), the spread of failures
will not cease when cascading
failures happen, which is equivalent
to the population not dying out in our MTBP. 

We assume that
initially a randomly chosen CS experiences
a failure. Since this CS does not have any 
parent, it is of type 0. Subsequently, 
when a type $t$ CS fails, it produces 
immediate children of two different types 
in accordance with the (immediate)
children distribution $h_t$ described below, 
independently of other infected CSes in 
the system.

\subsection{Multi-type branching processes}
	\label{subsec:MTBP}
	
The MTBP we use to approximate
failure propagation can be
constructed as follows: it starts with 
a single type 0 agent at the beginning
(the zero-th generation). Then, each member in 
the $k$-th generation ($k \in \Z_+$) produces 
immediate children of two different types, 
who become members of the $(k+1)$-th generation,
according to children distributions derived in 
Section~\ref{subsec:ChildDistribution}, 
independently of each other. 

Since an infected CS can have up to two 
parents and, when a CS has two parents, 
one of them is the parent of 
the other as shown in Fig.~\ref{fig:type}, 
the parents themselves 
can belong to two different generations. 
A failed CS in our infection graph belongs
to the $k$-th generation in the MTBP 
if and only if the length of the {\em longest} 
path in the infection graph from the root 
to the CS is equal to $k$.

\begin{figure}[h]
\centerline{
	\includegraphics[width=3.1in]{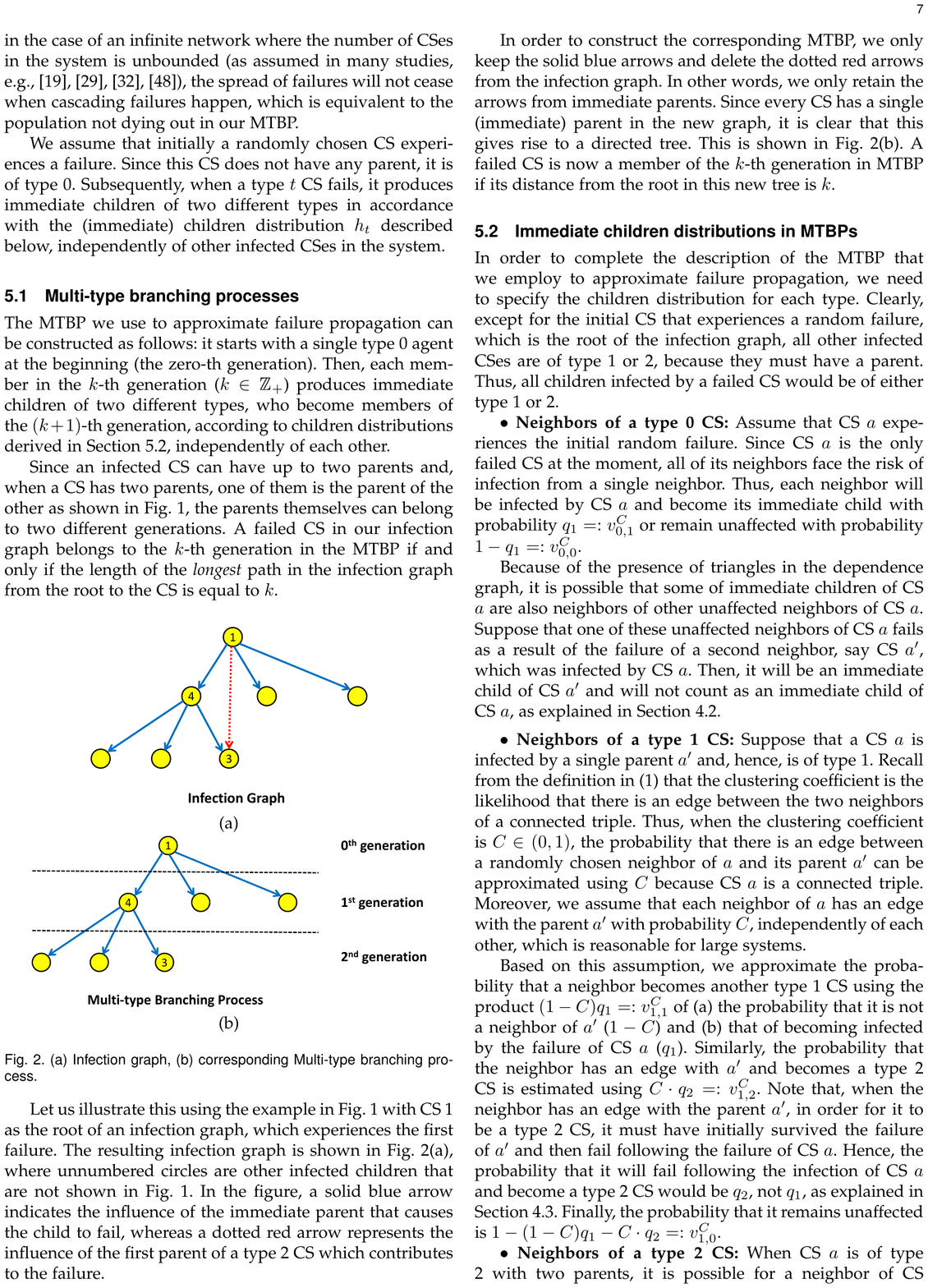}
}
\caption{(a) Infection graph, (b) corresponding 
	Multi-type branching process.}
\label{fig:MTBP}
\end{figure}

Let us illustrate this using the example
in Fig.~\ref{fig:type} with CS 1 as the
root of an infection graph, which 
experiences the first failure. The 
resulting infection graph is shown in 
Fig.~\ref{fig:MTBP}(a), where 
unnumbered circles
are other infected children that are 
not shown in Fig.~\ref{fig:type}.
In the figure, a solid blue arrow
indicates the influence of the 
immediate parent that causes the 
child to fail, whereas
a dotted red arrow represents the
influence of the first parent of
a type 2 CS which contributes to the 
failure.

In order to construct the corresponding
MTBP, we only keep the solid blue 
arrows and delete the dotted red arrows
from the infection graph. In other words, 
we only retain the arrows from 
immediate parents. Since every 
CS has a single (immediate) parent 
in the new graph, it is clear that this 
gives rise to a directed tree. 
This is shown in Fig.~\ref{fig:MTBP}(b). 
A failed CS is now a member of the $k$-th 
generation in MTBP if its distance from 
the root in this new tree is $k$.

\subsection{Immediate children distributions in 
	MTBPs}
	\label{subsec:ChildDistribution}

In order to complete the description of the 
MTBP that we employ to approximate failure
propagation, we need to specify the children
distribution for each type. Clearly, 
except for the initial
CS that experiences a random failure,
which is the root of the infection graph, 
all other infected CSes are of type 1 or 2, 
because they must have a parent. 
Thus, all children infected by a failed CS
would be of either type 1 or 2. 

$\bullet$ {\bf Neighbors of a type 0 CS:} 
Assume that CS $a$ experiences the initial
random failure. Since CS $a$ is the only 
failed CS at the moment, all of its neighbors 
face the risk of infection from a single 
neighbor. Thus, each neighbor will be 
infected by CS $a$ and become
its immediate child with 
probability $q_1 =: v^C_{0,1}$ or remain
unaffected with probability $1 - q_1 =: 
v^C_{0,0}$.

Because of the presence of triangles
in the dependence graph,
it is possible that some of immediate 
children of CS $a$ are also neighbors
of other unaffected neighbors of CS $a$. 
Suppose that one of these unaffected 
neighbors of CS $a$ fails as a result 
of the failure of a second neighbor, say 
CS $a'$, which was infected by CS
$a$. Then, it will be an immediate child of 
CS $a'$ and will not count as an immediate 
child of CS $a$, as explained in Section
\ref{subsec:TriangleIG}. 
\myskip

$\bullet$ {\bf Neighbors of a type 1 CS:} 
Suppose that a CS $a$ is infected by a 
single parent $a'$ and, hence, is of type 1. 
Recall from the definition in (\ref{eq:CC})
that the clustering coefficient is 
the likelihood that there is an edge between
the two neighbors of a connected triple. 
Thus, when the clustering coefficient is 
$C \in (0, 1)$, 
the probability that there is an edge between 
a randomly chosen neighbor of $a$ and its parent 
$a'$ can be approximated using $C$ because 
CS $a$ is a connected triple. 
Moreover, we assume that 
each neighbor of $a$ has an edge with the
parent $a'$ with probability $C$, independently
of each other, which is reasonable for
large systems.

Based on this assumption, we approximate the 
probability that a neighbor becomes another 
type 1 CS using the product $(1 - C) q_1
=: v^C_{1,1}$ of (a) the probability that it is 
not a neighbor of $a'$ ($1 - C$) and (b)
that of becoming infected by the failure
of CS $a$ ($q_1$). 
Similarly, the probability that the neighbor 
has an edge with $a'$ and 
becomes a type 2 CS is estimated using 
$C \cdot q_2 =: v^C_{1,2}$. 
Note that, when the
neighbor has an edge with the parent $a'$, in 
order for it to be a type 2 CS, it must have 
initially survived
the failure of $a'$ and then fail
following the failure of CS $a$. Hence, the 
probability that it will fail following
the infection of CS $a$ and become a type 2
CS would be $q_2$, 
not $q_1$, as explained in Section
\ref{subsec:InfProb}. 
Finally, the probability that it 
remains unaffected is $1 - (1 - C) q_1
- C \cdot q_2 =: v^C_{1,0}$. 

$\bullet$ {\bf Neighbors of a type 2 CS:} 
When CS $a$ is of type 2 with two parents, 
it is possible for a neighbor of CS $a$ to 
have an edge with both of the parents and, thus, 
have three failed neighbors. However, when the 
fraction of failed CSes is low or the 
clustering coefficient $C$ is 
not large, this would
not occur often. Furthermore, modeling CSes
with more than two infected neighbors would
require introducing additional types of CSes. 
For these reasons, we do not explicitly
model the 
cases where a CS faces more than 
two failed neighbors and approximate them 
using the case where the CS
has only two failed neighbors instead.

Making use of this argument, we assume that 
neighbors of a type 2 CS  
have an edge with (a) neither parent with 
probability $(1-C)^2$ and (b) one parent 
with probability $1 - (1-C)^2 = (2C - C^2)$, 
independently of each other. Note that
the possibility of being a neighbor with
both parents is folded into the latter
scenario of having an edge with only one
parent.

On the basis of this assumption, we estimate 
the probability that a neighbor of CS $a$
becomes a type 1 CS using $(1-C)^2 q_1 
=: v^C_{2,1}$ and that of
becoming a type 2 CS by $(2 C - C^2) q_2 
=: v^C_{2,2}$. Finally, the probability 
that the neighbor avoids infection is
equal to $1 - (1-C)^2 q_1 - (2 C - C^2) q_2
=: v^C_{2,0}$. When it is convenient and
there is no risk of confusion, 
we do not explicitly denote the dependence 
on clustering coefficient $C$ and write
$v_{t,t'}$ in place of $v^C_{t,t'}$. 
\myskip

\subsubsection{Conditional degree distributions
	of failed CSes}	
	
Given these probabilities, once we fix
the degree $d$ and type $t$ of a failed
CS, we can approximate the distribution
of the number of immediate children of 
different types,
which are produced by the CS with the 
help of multinomial distributions: 
the probability that it produces $o_1$
type 1 children and $o_2$ type 2 children
(with $o_1 + o_2 \leq d - t$) is given by 
\beqan
\left[ \frac{d-t}{o_1 : o_2} \right] 
v_{t,1}^{o_1} \ v_{t,2}^{o_2} 
\ v_{t,0}^{d-t-o_1-o_2}, 
\eeqan 
where $v_{0,2} = 0$, and 
\beqan
\left[ \frac{d-t}{o_1 : o_2} \right]
:= \frac{(d-t)!}{o_1! o_2! (d-t-o_1-o_2)!}, 
\eeqan
denotes a multinomial coefficient. 
Hence, we can estimate the children 
distribution by conditioning on the degree
of the CS.

The conditional degree distributions of 
infected CSes in general differ
from the prior distribution ${\bf p}_0$. We 
denote the conditional degree distribution of 
type $t$ CSes by ${\bf p}_t
:= (p_t(d); \ d \in \N)$, $t \in \cT$. 
Making use of the degree 
distributions of neighbors defined in 
(\ref{eq:w1}) and (\ref{eq:w2}) and the failure
probability function $\wp_f$, we can 
obtain the following conditional 
degree distributions of infected CSes: 
\beqan
p_1(d)
\myeq \frac{w_1(d) \ \wp_f(d, 1)}{q_1} 
	\ \mbox{ for } d \in \N,
	\label{eq:p1} \\  
p_2(d) 
\myeq \frac{w_1(d) (\wp_f(d,2) - \wp_f(d,1))}
	{(1-w_1(1)) q_2}
	\ \mbox{ for } d \in \N \setminus \{1\},   
	\label{eq:p2}
\eeqan
and $p_2(1) = 0$.

\subsubsection{Children distributions} 

Putting all the pieces together, 
we acquire the following (immediate) 
children distributions 
for different types: fix clustering 
coefficient $C \in (0, 1)$. The probability 
that a CS of type $t \in \{0, 1, 2\}$ 
will produce immediate children given by a 
children vector ${\bf o} = (o_1, o_2) \in 
\Z_+^{2}$, where $o_j$ is the number of 
type $j$ immediate children, is given by 
\beqa
&& \myhb h_t(\bo; C) \lb 
\myeq \sum_{\tau = o_1 + o_2}^\infty 
	\Big( p_t(\tau+t)
		\ \left[ \frac{\tau}{o_1:o_2} \right] 
	 v_{t,1}^{o_1} \
		v_{t,2}^{o_2} \
			v_{t,0}^{\tau - o_1 - o_2} \Big).  
		\label{eq:ht}
\eeqa

\subsection{Probability of extinction}
	\label{subsec:PoE}
	
Let $N_t(k)$, $t \in \cT$ and $k \in \N$, be 
the number of type $t$ CSes in the $k$-th 
generation of the MTBP. For instance, 
in Fig.~\ref{fig:MTBP}(b), $N_1(1) = 3$, 
$N_2(1) = 0$, $N_1(2) = 2$, and $N_2(2) = 1$. 
The probability $\bP{ \limsup_{k \to \infty} 
\ (N_1(k) + N_2(k)) = 0}$ is 
called the probability of extinction (PoE)
\cite{Harris}. Obviously, the PoCF 
is equal to one minus the PoE.

Let $\boldsymbol{\mu} = (\mu_1, \mu_2)$, 
where $\mu_t$ is the PoE, starting 
with a single type $t$ infected CS (instead
of a type 0 CS).
The probability of interest to us is the
PoE starting with a single type 0 CS,
which we denote by $\mu_0$. This can be 
computed from $\mu_1$ 
by conditioning on the degree of the 
CS, say $a$, which experiences the initial 
random failure. In other words, 
\beqa
\mu_0
\myeq \sum_{d \in \N} p_0(d) 
	\big( 1 - q_1 (1 - \mu_1) \big)^d.
	\label{eq:mu0}
\eeqa
Note that $q_1 (1 - \mu_1)$ is the probability
that a neighbor of CS $a$ will be infected
and then trigger a cascade of failures. 
Thus, $(1 - q_1(1-\mu_1))^d$ is the 
probability that none of the $d$ neighbors
gives rise to cascading failures. 
A key question of interest to us is how clustering
coefficient $C$ affects the PoE $\mu_0$ defined in
(\ref{eq:mu0}), which is a strictly increasing
function of $\mu_1$.

For each $t \in \cT$, let
$\E{h_t} = (\E{h_{t,t'}}; \ t' \in \cT)$
be a $1 \times 2$ row vector,
whose $t'$-th element is the expected number of
type $t'$ immediate children from a type 
$t$ CS. Define $\bM = [M_{t,t'}]$ to be 
a $2 \times 2$ matrix, whose
$t$-th row is $\E{h_t}$, i.e., 
$M_{t,t'} = \E{h_{t,t'}}$ for all $t, t'
\in \cT$. Let $\rho(\bM)$ denote the spectral
radius of $\bM$~\cite{matrix}.

It is well known \cite{Harris} that 
$\boldsymbol{\mu} = {\bf 1}$ if (i) $\rho(\bM)
< 1$ or (ii) $\rho(\bM) = 1$ and there is
at least one type for which the probability 
that it produces exactly one child is not 
equal to one. Similarly, 
if $\rho(\bM) > 1$, then $\boldsymbol{\mu}
< {\bf 1}$ and there is strictly positive
probability that spreading failures continue 
forever in an infinite system, suggesting
that there could be a cascade of failures
in a large system.

\section{Main Analytical Result}
	\label{sec:Main}

As stated in Section~\ref{sec:Introduction}, 
our goal is to understand how
clustering among CSes in the dependence
graph affects the robustness of the system.
As it will be clear, the relation between 
the PoCF (or equivalently, PoE) and system
parameters, including clustering coefficient,
is rather complicated. In particular, 
our main finding stated in Theorem~\ref{thm:1}
below suggests that whether increasing 
clustering escalates the PoE or not depends
on the infection probabilities $q_1$ and
$q_2$ and the current PoEs $\mu_1$ and
$\mu_2$. 
\myskip 

\begin{theorem}	\label{thm:1}
Suppose $0 < C_1 < C_2 < 1$ and let
$\bmu^i = (\mu^i_1, \mu^i_2)$, $i = 1, 2$,
be the PoE vector corresponding to 
clustering coefficient $C_i$. Assume that
$\bmu^1 < {\bf 1}$. Then, $\bmu^1 \leq 
\bmu^2$ if
$(1 - \mu^1_1) q_1 \geq (1 - \mu^1_2) q_2$.
Analogously, we have $\bmu^1 \geq \bmu^2$
if $(1 - \mu^1_1) q_1 \leq (1 - \mu^1_2) q_2$.
\end{theorem}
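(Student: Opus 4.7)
The plan is to express each PoE vector $\bmu^i$ as a fixed point of the probability generating function (pgf) of the (immediate) children distributions, and then to compare the two fixed points by showing that, at the particular point $\bmu = \bmu^1$, the hypothesis in the theorem controls the sign of the derivative of this pgf with respect to $C$.

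Starting from (\ref{eq:ht}) and collapsing the multinomial sum yields the convenient closed form
\[
G_t(\bmu; C) \; = \; \sum_{\tau = 0}^{\infty} p_t(\tau + t)\, r_t(\bmu; C)^{\tau}, \qquad t \in \{1,2\},
\]
where $r_t(\bmu;C) := v^C_{t,0} + v^C_{t,1}\mu_1 + v^C_{t,2}\mu_2 \in [0,1]$, and each $\bmu^i$ satisfies $\bmu^i = \bG(\bmu^i; C_i)$ with $\bG := (G_1, G_2)$. Substituting the expressions for $v^C_{t,t'}$ gives $r_1(\bmu;C) = 1-(1-C)q_1(1-\mu_1)-C q_2(1-\mu_2)$ and $r_2(\bmu;C) = 1-(1-C)^2 q_1(1-\mu_1)-(1-(1-C)^2)\,q_2(1-\mu_2)$. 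A direct differentiation then gives $\partial r_1/\partial C = q_1(1-\mu_1) - q_2(1-\mu_2)$ and $\partial r_2/\partial C = 2(1-C)\bigl[q_1(1-\mu_1)-q_2(1-\mu_2)\bigr]$, so both share the sign of $q_1(1-\mu_1)-q_2(1-\mu_2)$; and since $G_t$ is a power series in $r_t \geq 0$ with non-negative coefficients, $\partial G_t/\partial C$ inherits that sign.

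Evaluating at $\bmu = \bmu^1$, the hypothesis $(1-\mu^1_1)q_1 \geq (1-\mu^1_2)q_2$ therefore makes $C \mapsto G_t(\bmu^1;C)$ non-decreasing on $(0,1)$ for both $t$, giving the pivotal one-step inequality $\bG(\bmu^1; C_2) \geq \bG(\bmu^1; C_1) = \bmu^1$. Because $\bG(\,\cdot\,;C_2)$ is componentwise monotone in $\bmu$ (its coefficients are non-negative), iterating it from the starting point $\bmu^1$ produces a non-decreasing sequence in $[0,1]^2$ converging to some fixed point $\bmu^{**}$ of $\bG(\,\cdot\,;C_2)$ satisfying $\bmu^{**} \geq \bmu^1$. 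For $q_1, q_2 > 0$ and $0 < C_2 < 1$ the MTBP is positively regular, so classical results~\cite{Harris} guarantee that the only fixed points of $\bG(\,\cdot\,;C_2)$ in $[0,1]^2$ are $\bmu^2$ and ${\bf 1}$.

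The main obstacle is excluding the possibility $\bmu^{**} = {\bf 1}$. When the MTBP at $C_2$ is supercritical ($\bmu^2 < {\bf 1}$), the associated mean matrix has spectral radius strictly greater than one, so a linearization near ${\bf 1}$ shows that ${\bf 1}$ is a repelling fixed point: a non-decreasing iteration started strictly below ${\bf 1}$ cannot accumulate at ${\bf 1}$. Together with $\bmu^1 < {\bf 1}$, this forces $\bmu^{**} = \bmu^2$, hence $\bmu^2 \geq \bmu^1$. In the subcritical case at $C_2$, $\bmu^2 = {\bf 1} \geq \bmu^1$ is immediate. The converse implication under $(1-\mu^1_1)q_1 \leq (1-\mu^1_2)q_2$ follows from the symmetric argument: the derivative computation now yields $\bG(\bmu^1;C_2) \leq \bmu^1$, iteration from $\bmu^1$ is non-increasing and converges to a fixed point bounded above by $\bmu^1 < {\bf 1}$, which can only be $\bmu^2$, giving $\bmu^2 \leq \bmu^1$.
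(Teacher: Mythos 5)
Your proof is correct and takes essentially the same route as the paper: both reduce the claim to the single one-step inequality $\bmu^1 \leq {\bf f}(\bmu^1; C_2)$ by observing that the arguments $\psi_1$ (affine in $C$) and $\psi_2$ (with the increasing factor $C(2-C)$) of the generating functions move with $C$ in the direction of the sign of $q_1(1-\mu^1_1) - q_2(1-\mu^1_2)$. The only difference is that where the paper closes by citing Harris's comparison corollary, you re-derive that corollary via monotone iteration of ${\bf f}(\,\cdot\,; C_2)$ from $\bmu^1$ together with a Perron--Frobenius argument excluding convergence to ${\bf 1}$, which is a valid (if slightly informal at the linearization step) unpacking of the cited result.
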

\begin{proof}
A proof is provided in Section~\ref{appen:thm1}. 
\end{proof}

The theorem suggests that the relation between 
clustering coefficient and PoCF is far from simple 
in that it depends on many factors, including the 
degree distribution ${\bf p}_0$ and the
failure probability function $\wp_f$, in a rather
subtle fashion. At the same time, it hints that
their effects can be succinctly summarized by 
the terms in the condition, namely the current 
PoEs $\mu_t$, $t \in \cT$, and infection 
probabilities $q_t$, $t \in \cT$. 

Before we proceed with our
discussion, let us first rewrite the conditions
in the theorem in a slightly different manner:
\beqa
\frac{1 - \mu^1_1}{1 - \mu^1_2} \geq \frac{q_2}{q_1}
\mbox{ and } 
\frac{1 - \mu^1_1}{1 - \mu^1_2} \leq \frac{q_2}{q_1} 
	\label{eq:Equiv}
\eeqa

{\em Remark 1.}
Recall that $1 - \mu_t$ is the PoCF starting with a 
single type $t$ CS. Thus, $(1 - \mu_1) / 
(1 - \mu_2)$ is the ratio of the PoCFs starting
with two different types of CSes.
It is now clear that the 
conditions in the theorem compare the ratio of PoCFs 
for two different types of CSes to that of infection 
probabilities. While one would expect the ratio of 
infection probabilities to play a role, to the best 
of our knowledge, our result is the first to bring 
to light (i) the fact that the ratio of PoCFs, 
$(1 - \mu_1)/(1 - \mu_2)$, plays a similar/important 
role and (ii) a condition that 
tells us when increasing clustering improves 
or degrades the robustness of the system.

{\em Remark 2.}	
A key implication of our finding is the 
following: consider two distinct degree
distributions with the same mean, but one 
degree distribution is more concentrated 
(around the mean) with smaller variance 
than the other. As the 
clustering coefficient becomes larger, the 
fraction of type 2 CSes tends to increase. 
Because a type 2 CS already has two failed 
neighbors compared to a single failed neighbor 
of a type 1 CS, when CS degrees are 
concentrated (around the mean degree), a type
2 CS has fewer neighbors that it can potentially 
infect, thereby producing a smaller number
of children on the average.  

A consequence of this is that it leads to 
higher concentration of failed CSes in a 
local neighborhood. At the same time, 
it hinders spreading failures 
beyond the local neighborhood of already 
failed CSes, especially when the mean degrees 
are not large. Moreover, as we will demonstrate
in the subsequent section, clustering has more 
pronounced impact on PoCFs. 

{\em Remark 3.}
One important fact we should point out is that
Theorem~\ref{thm:1} alone does not
guarantee the monotonicity of PoE (equivalently, 
PoCF) with respect to clustering coefficient.
The reason for this is as follows:
suppose that $0 < C_1 < C_2 < C_3 < 1$ and
$\bmu^i$ is the PoE vector corresponding to 
$C_i$, $i = 1, 2, 3$. Then, it is possible 
to have $\frac{1 - \mu^1_1}{1 - \mu^1_2} 
> \frac{q_2}{q_1}$ while $\frac{1 - \mu^2_1}
{1 - \mu^2_2} < \frac{q_2}{q_1}$ because the
PoEs can change when the clustering coefficient
goes up from $C_1$ to $C_2$. When this
happens, Theorem~\ref{thm:1} tells us
(i) $\bmu^1 \leq \bmu^2$ and $\bmu^1 \leq \bmu^3$
from the first inequality and (ii) $\bmu^2
\geq \bmu^3$ from the second inequality. Thus, 
in principle, we could end up with $\bmu^1 < 
\bmu^3 < \bmu^2$. However, our numerical studies 
provided in the subsequent section indicate
that the monotonicity of PoCF may hold in many 
cases of practical interest.

\section{Numerical Results}
	\label{sec:Numerical}

Our main result in Section~\ref{sec:Main}
tells us that, once we know the PoEs $\mu_t$, 
$t \in \cT$, 
and the infection probabilities $q_t$, 
$t \in \cT$, 
for a fixed clustering coefficient, we can 
determine whether an 
increasing clustering coefficient will 
elevate or lower the resulting PoCF. But,  
this requires the knowledge of, among other
things, the infection probabilities $q_t$, $t 
\in \cT$, which depend on the degree distribution
in the dependence graph and failure probability 
function $\wp_f$. 

The goal of this section is to provide numerical 
results to examine how various parameters 
in our analysis are affected by the clustering
coefficient as we change the degree distribution 
and the failure probability function. To this 
end, we vary the clustering coefficient
over [0.025, 0.40], under two commonly studied 
degree distributions -- power law degree 
distributions and Poisson degree distributions. 

For our study, we consider failure probability
functions of the form $\wp_f(d, n_f) = 
(n_f / d)^\alpha$ with $\alpha \in [0.6, 1.0]$. 
When $\alpha$ is small, even the failure of a single
neighbor causes the failure of a CS with relatively 
high probability. In this sense, $\wp_f(d, n_f)$ 
is less sensitive to $d$. 
On the other hand, for large $\alpha$, 
many neighbors have to fail first before a CS
fails as a result. 
Throughout the section, we assume that
the largest degree of a CS is 20, which we 
denote by $D_{\max}$, and define $\cD := \{1, 2, 
\ldots, D_{\max}\}$.\footnote{We conducted
additional numerical studies with larger
values of $D_{\max}$ and observed similar
qualitative results, which are not reported
here.}

\subsection{Power law degree distributions}

Power law degree distributions assume 
$p_0(d) \propto d^{-\beta}$ for some positive
constant $\beta$. For many real networks, the power 
law exponent $\beta$ lies between 2 and 4 
\cite{Albert00, Lakhina03}. In our study, we
considered $\beta \in [1.5, 4]$. However, 
for $\beta > 2.5$, the resulting PoCF is 
very small and, for this reason, we do not 
present the numbers here.

\begin{figure}[h]
\centerline{
\includegraphics[width=3.45in]{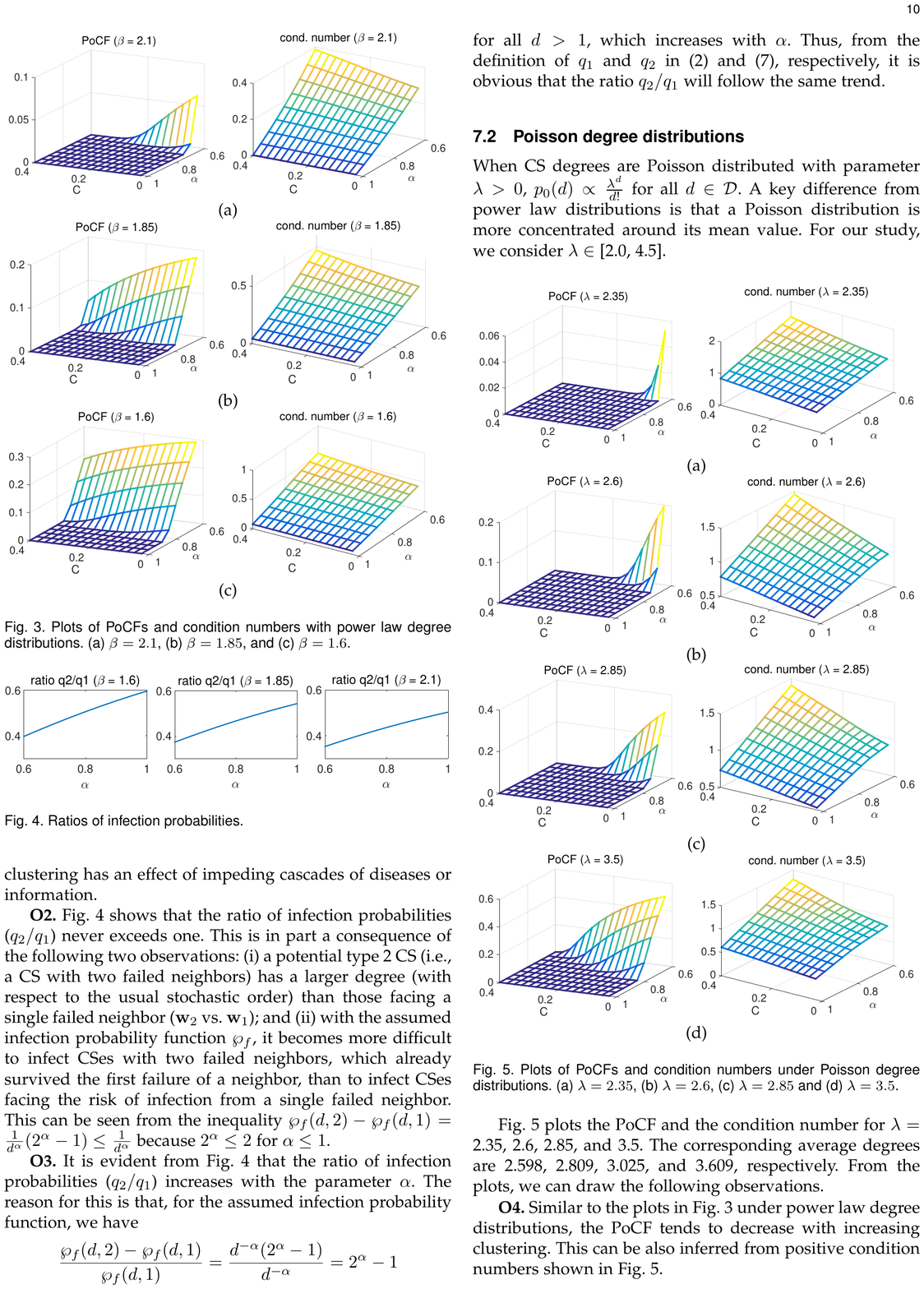}
}
\caption{Plots of PoCFs and condition numbers with power
law degree distributions. 
	(a) $\beta = 2.1$, (b) $\beta = 1.85$, 
and (c) $\beta = 1.6$.}
\label{fig:PoCF-PL}
\end{figure}

\begin{figure}[h]
\centerline{
\includegraphics[width=3.5in]{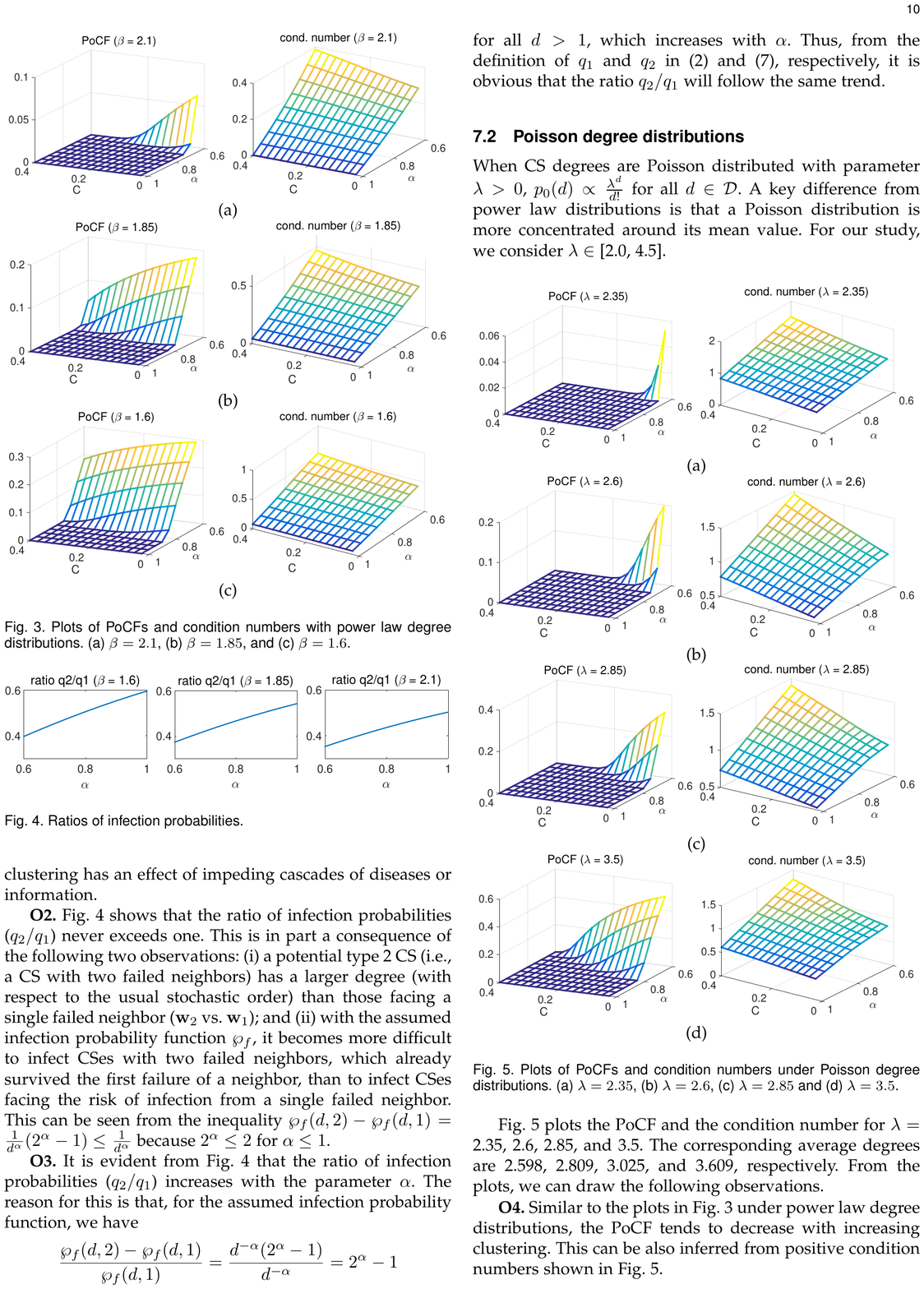}
}
\caption{Ratios of infection probabilities.}
\label{fig:qratio-PL}
\end{figure}

Fig.~\ref{fig:PoCF-PL} plots the PoCF (i.e., 
$1-\mu_0$)
and the condition number as a function of the 
failure probability function parameter $\alpha$ 
and clustering coefficient $C$ for three different
values of power law exponent $\beta$ -- 1.6, 1.85 and 
2.1. Here, the condition number refers to the difference 
$\frac{1-\mu_1}{1-\mu_2} - \frac{q_2}{q_1}$. 
The average degree of CSes under the three values
of $\beta$ is  3.186, 2.548, and 2.089, respectively. 
As mentioned earlier, 
while we examined the scenarios with larger $\beta$, 
the PoCF was too small to be of much interest in 
our opinion.

{\bf O1.} According to Theorem~\ref{thm:1}, when 
the condition number is positive (resp. negative), 
increasing clustering coefficient $C$ reduces 
(resp. elevates) the PoCF. It is clear from 
Fig.~\ref{fig:PoCF-PL} that, for all three values 
of $\beta$, the condition number is 
always positive, and the PoCF decreases with 
clustering coefficient $C$. 
Hence, the plots corroborate our finding in the
theorem. Furthermore, they suggest that
the system is likely to become more robust
against random failures and less prone to 
experience a cascade of failures. 
This observation is consistent
with the findings of \cite{Miller09, 
Zhuang16}, 
which reported that clustering has an effect
of impeding cascades of diseases or information.

{\bf O2.} Fig.~\ref{fig:qratio-PL} shows that the 
ratio of infection probabilities ($q_2/q_1$) never 
exceeds one. This is in part a consequence of 
the following two observations: 
(i) a potential type 2 CS (i.e., a CS with two
failed neighbors) has a larger degree (with
respect to the usual stochastic order) than 
those facing a single failed neighbor ($\bw_2$
vs. $\bw_1$); and (ii) with the assumed
infection probability function $\wp_f$, it 
becomes more difficult to infect 
CSes with two failed neighbors, which already
survived the first failure of a neighbor, than to 
infect CSes facing the risk of infection from 
a single failed neighbor. This can be seen 
from the inequality
$\wp_f(d,2) - \wp_f(d,1) = \frac{1}{d^\alpha}
(2^\alpha - 1) \leq \frac{1}{d^{\alpha}}$ because
$2^\alpha \leq 2$ for $\alpha \leq 1$.

{\bf O3.} It is evident from Fig.~\ref{fig:qratio-PL}
that the ratio of infection probabilities
($q_2 / q_1$) increases with the parameter $\alpha$. 
The reason for this is that, for the assumed infection
probability function, we have
\beqan
\frac{\wp_f(d,2) - \wp_f(d,1)}{\wp_f(d,1)}
\myeq \frac{d^{-\alpha}(2^{\alpha} - 1)}{d^{-\alpha}}
	= 2^{\alpha} - 1
\eeqan
for all $d > 1$, 
which increases with $\alpha$. Thus, from the 
definition of $q_1$ and $q_2$ in (\ref{eq:q1})
and (\ref{eq:q2-2}), respectively, 
it is obvious that the ratio
$q_2/q_1$ will follow the same trend.

\subsection{Poisson degree distributions}

When CS degrees are Poisson distributed
with parameter $\lambda > 0$, $p_0(d)
\propto \frac{\lambda^d}{d!}$
for all $d \in \cD$. A key difference from 
power law distributions is that a Poisson 
distribution is more concentrated around its
mean value. For our study, 
we consider $\lambda \in$ [2.0, 4.5].

\begin{figure}[h]
\centerline{
\includegraphics[width=3.45in]{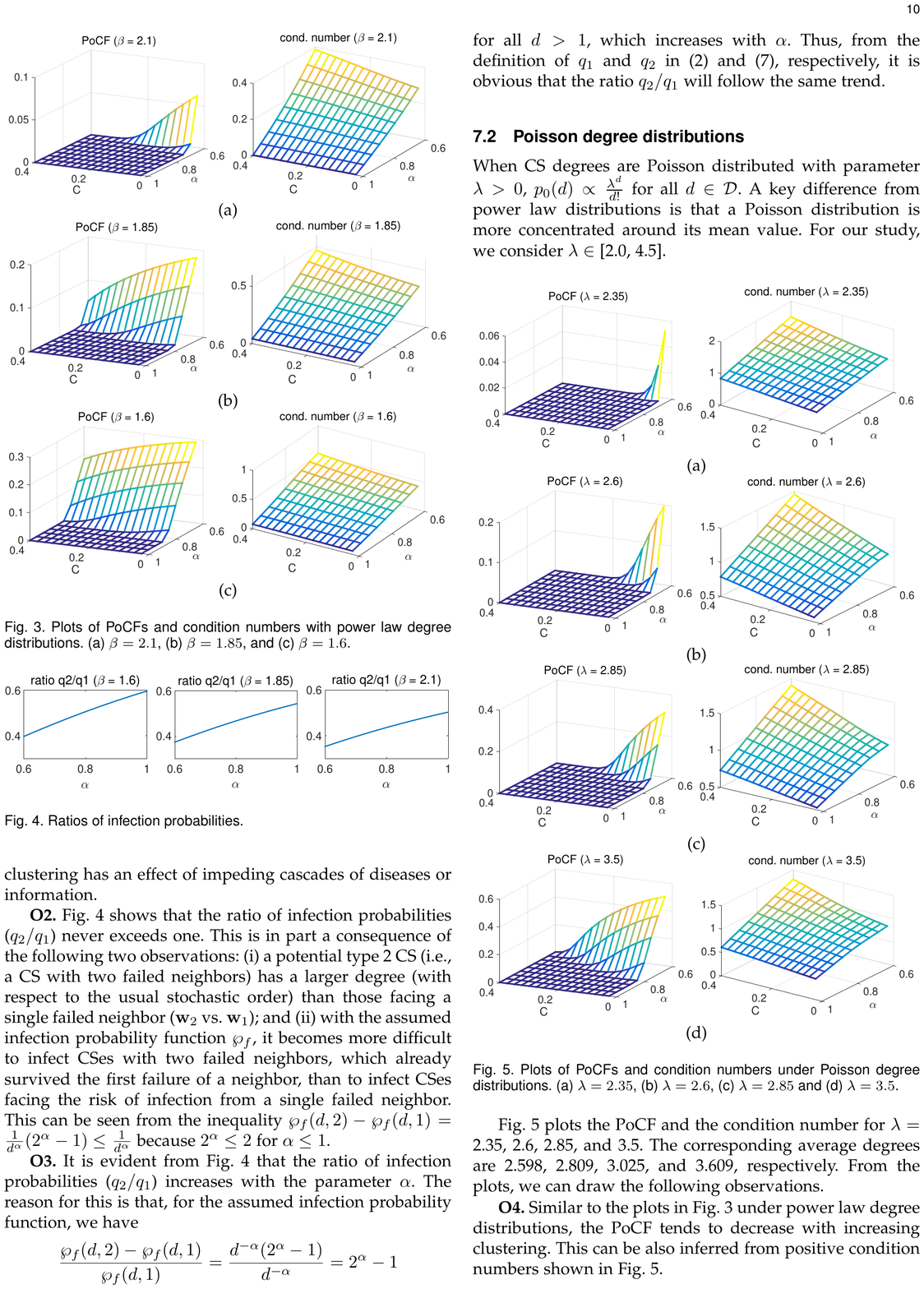}
}
\caption{Plots of PoCFs and condition numbers under 
Poisson degree distributions. 
(a) $\lambda = 2.35$, (b) $\lambda = 2.6$, 
(c) $\lambda = 2.85$ and (d) $\lambda = 3.5$.}
\label{fig:PoCF-Poi}
\end{figure}

\begin{figure}[h]
\centerline{
\includegraphics[width=3.5in]{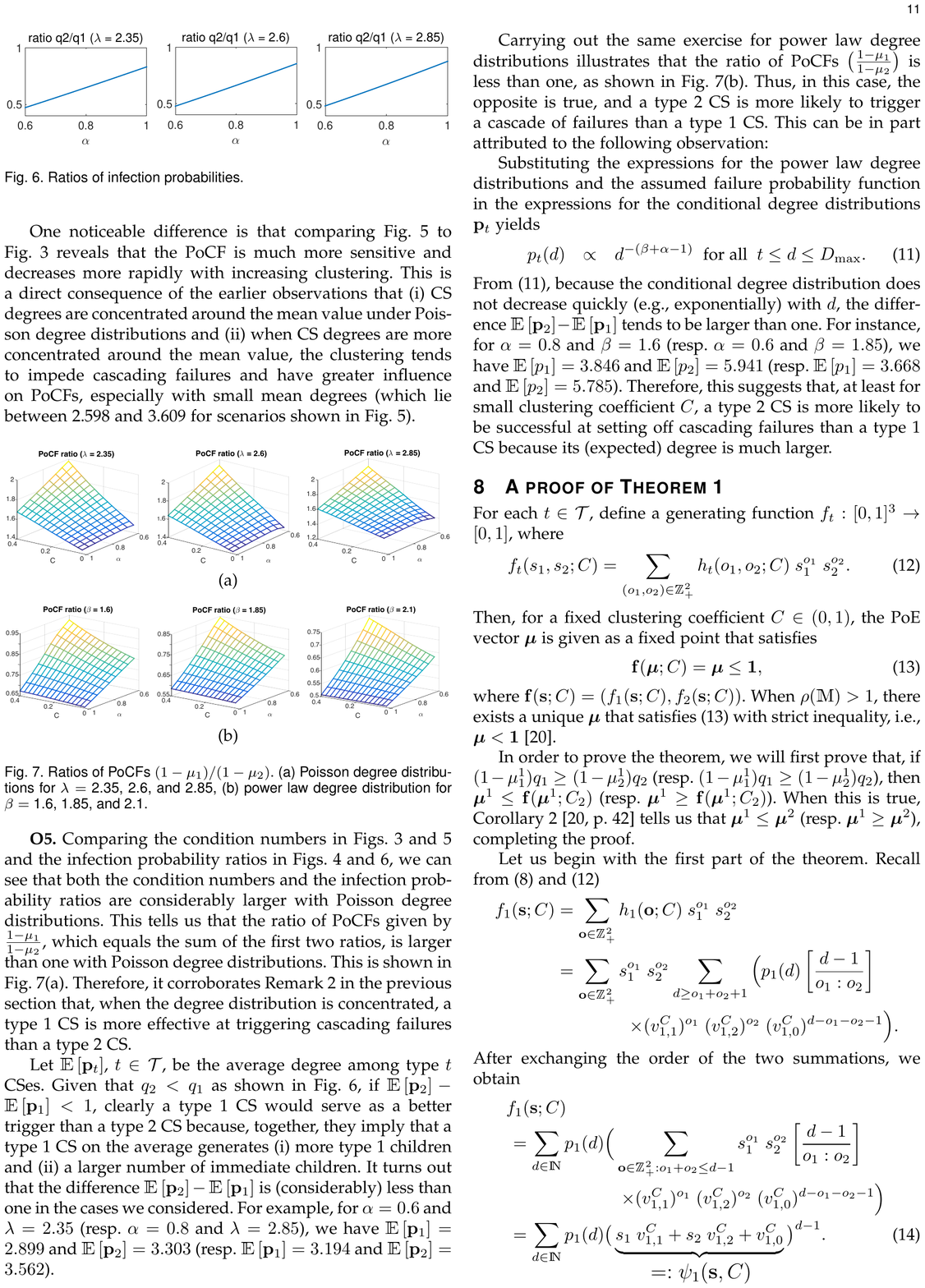}
}
\caption{Ratios of infection probabilities.}
\label{fig:qratio-Poi}
\end{figure}

Fig.~\ref{fig:PoCF-Poi} plots the PoCF and 
the condition number for $\lambda =$ 2.35, 2.6, 
2.85, and
3.5. The corresponding average degrees are 
2.598, 2.809, 3.025, and 3.609, respectively. 
From the plots, we can draw the following
observations.

{\bf O4.} Similar to the plots in Fig.
\ref{fig:PoCF-PL} under power law 
degree distributions, the PoCF tends to decrease with
increasing clustering. 
This can be also inferred from positive
condition numbers shown in Fig.~\ref{fig:PoCF-Poi}.

One noticeable difference is that comparing 
Fig.~\ref{fig:PoCF-Poi} to Fig.~\ref{fig:PoCF-PL}
reveals that the PoCF is much more sensitive and 
decreases more rapidly with 
increasing clustering. This is a 
direct consequence of the earlier observations that 
(i) CS degrees are concentrated around the mean value 
under Poisson degree distributions and (ii) when 
CS degrees are more concentrated around the mean 
value, the clustering tends to impede cascading 
failures and have greater influence on PoCFs, 
especially with small mean degrees
(which lie between 2.598 and 3.609 for scenarios 
shown in Fig.~\ref{fig:PoCF-Poi}). 

\begin{figure}[h]
\centerline{
\includegraphics[width=3.5in]{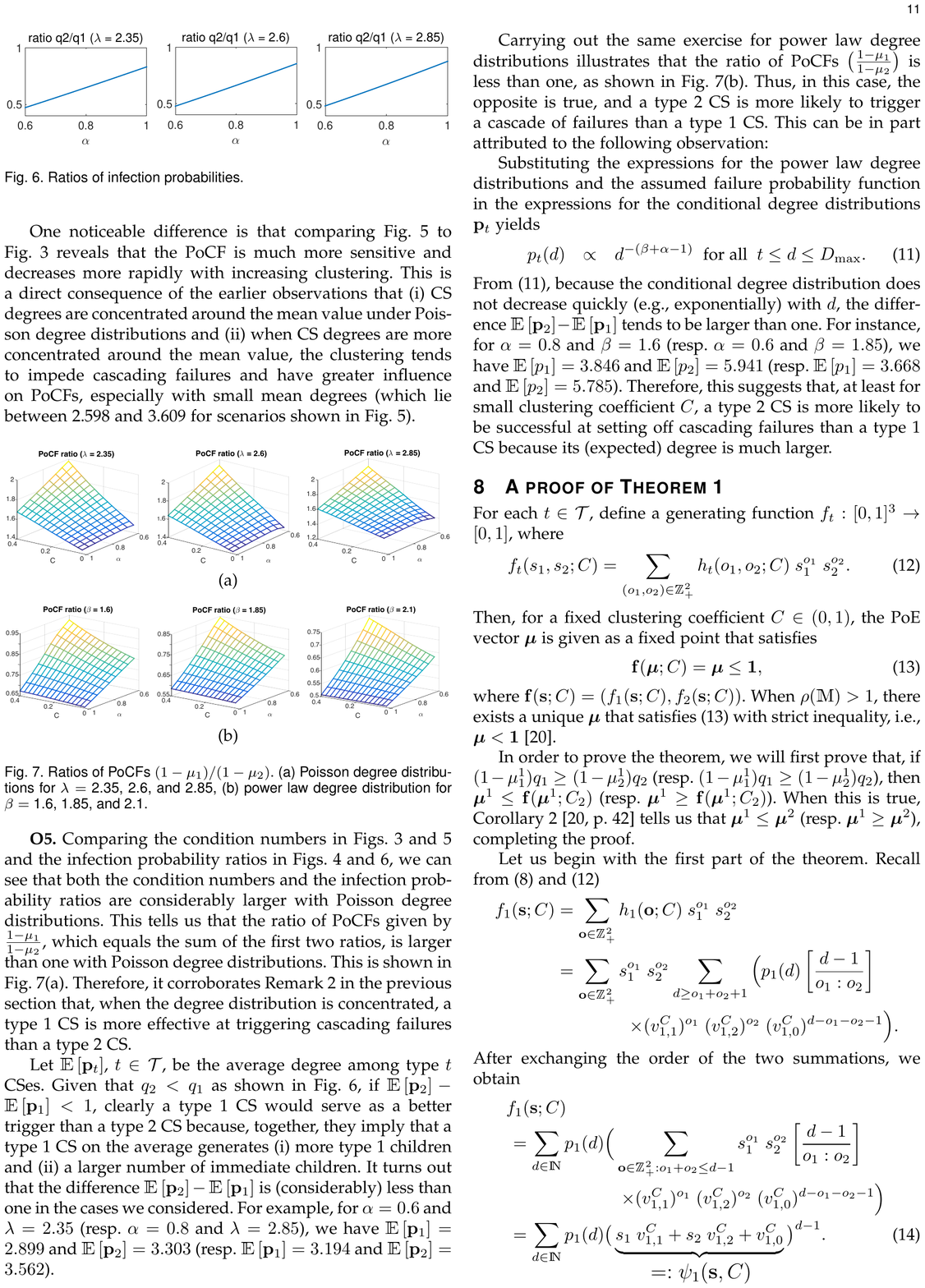}
}
\caption{Ratios of PoCFs $(1-\mu_1) / (1-\mu_2)$. 
(a) Poisson degree distributions for $\lambda =$ 2.35, 2.6, 
	and 2.85, (b) power law degree distribution for $\beta =$ 
	1.6, 1.85, and 2.1.}
\label{fig:PoCF-Ratio}
\end{figure}

{\bf O5.} Comparing the condition numbers
in Figs.~\ref{fig:PoCF-PL} and \ref{fig:PoCF-Poi}
and the infection probability ratios in 
Figs.~\ref{fig:qratio-PL}
and \ref{fig:qratio-Poi}, we can see that 
both the condition numbers and the infection 
probability ratios are considerably larger with 
Poisson degree distributions. This tells us 
that the ratio of PoCFs given by 
$\frac{1-\mu_1}{1 - \mu_2}$, 
which equals the sum of the first two ratios, 
is larger than one with Poisson degree
distributions. This is shown in Fig.
\ref{fig:PoCF-Ratio}(a). Therefore, it corroborates 
Remark 2 in the previous section that, when the 
degree distribution is concentrated, a type 1 
CS is more effective at triggering cascading
failures than a type 2 CS. 

Let $\E{{\bf p}_t}$, $t \in \cT$, be the average
degree among type $t$ CSes. Given that 
$q_2 < q_1$ as shown in Fig.~\ref{fig:qratio-Poi}, 
if $\E{{\bf p}_2} - 
\E{{\bf p}_1} < 1$, clearly a type 1 CS
would serve as a better trigger than a type 2 CS 
because, together, they imply that a type 1 CS on 
the average generates (i) more type 1 children and 
(ii) a larger number of immediate children. 
It turns out that the difference $\E{{\bf p}_2} 
- \E{{\bf p}_1}$ is (considerably) less than one 
in the cases we considered. For example,
for $\alpha = 0.6$ and $\lambda = 2.35$
(resp. $\alpha = 0.8$ and $\lambda = 2.85$), 
we have $\E{{\bf p}_1} = 2.899$ and $\E{{\bf p}_2} 
= 3.303$ (resp. $\E{{\bf p}_1} = 3.194$ and 
$\E{{\bf p}_2} = 3.562$).

Carrying out the same exercise for power law degree 
distributions illustrates that the ratio of PoCFs 
$\big( \frac{1-\mu_1}{1 - \mu_2} \big)$ is less 
than one, as shown in Fig.~\ref{fig:PoCF-Ratio}(b). 
Thus, in this case, the opposite is true, and a 
type 2 CS is more likely to trigger a cascade 
of failures than a type 1 CS. This can be in 
part attributed to the following observation:

Substituting the expressions for the power law
degree distributions and the assumed 
failure probability 
function in the expressions for the conditional
degree distributions ${\bf p}_t$ yields
\beqa
p_t(d) 
& \propto & d^{-(\beta + \alpha - 1)} \
	\mbox{ for all } \ t \leq d \leq D_{\max}.  
		\label{eq:PL-pt} 
\eeqa
From (\ref{eq:PL-pt}), because the conditional 
degree distribution does not decrease quickly 
(e.g., exponentially) with $d$, the difference 
$\E{{\bf p}_2} - \E{{\bf p}_1}$
tends to be larger than one. For instance, 
for $\alpha = 0.8$ and $\beta = 1.6$
(resp. $\alpha = 0.6$ and $\beta = 1.85$), 
we have $\E{p_1} = 3.846$ and $\E{p_2} = 5.941$
(resp. $\E{p_1} = 3.668$ and $\E{p_2} = 5.785$).
Therefore, this suggests that, at least for 
small clustering coefficient $C$, a type 2 CS
is more likely to be successful at setting off
cascading failures than a type 1 CS because
its (expected) degree is much larger.

\section{A proof of Theorem~\ref{thm:1}}
	\label{appen:thm1}

For each $t \in \cT$, define a generating
function $f_t: [0, 1]^{3} \to [0, 1]$, 
where 
\beqa
f_t(s_1, s_2; C) = \sum_{(o_1, o_2) \in \Z_+^{2}} 
	h_t(o_1, o_2; C) \ s_1^{o_1} \ s_2^{o_2}.
	\label{eq:GF}
\eeqa
Then, for a fixed clustering coefficient $C
\in (0, 1)$, the PoE vector $\boldsymbol{\mu}$ 
is given as a fixed point that satisfies 
\beqa
{\bf f}(\bmu; C)
\myeq \bmu \leq {\bf 1}, 
	\label{eq:fixed}
\eeqa
where ${\bf f}(\bs; C) = (f_1(\bs; C), f_2(\bs; C))$.
When $\rho(\bM) > 1$, there exists a unique $\bmu$
that satisfies (\ref{eq:fixed}) with strict
inequality, i.e., $\bmu < {\bf 1}$
\cite{Harris}.

In order to prove the theorem, we will first
prove that, if $(1 - \mu^1_1) q_1 
\geq (1 - \mu^1_2) q_2$ (resp. $(1 - \mu^1_1) q_1 
\geq (1 - \mu^1_2) q_2$), then $\bmu^1 \leq
{\bf f}(\bmu^1; C_2)$ (resp. $\bmu^1 \geq
{\bf f}(\bmu^1; C_2)$).  When this is true, 
Corollary 2~\cite[p. 42]{Harris} tells us
that $\bmu^1 \leq \bmu^2$ (resp. $\bmu^1
\geq \bmu^2$), completing
the proof.

Let us begin with the first part of the theorem. 
Recall from (\ref{eq:ht}) and (\ref{eq:GF}) 
\beqan
f_1(\bs; C)
\myeq \sum_{\bo \in \Z_+^2} h_1(\bo; C)
	\ s_1^{o_1} \ s_2^{o_2} \lb 
\myeq \sum_{\bo \in \Z_+^2} 
	s_1^{o_1} \ s_2^{o_2} 
		\sum_{d \geq o_1 + o_2 + 1} \Big( p_1(d)
	\left[ \frac{d-1}{o_1:o_2} \right] \lb
&& \myhf \times 
	(v^{C}_{1,1})^{o_1} \ (v^{C}_{1,2})^{o_2}
		\ (v^{C}_{1,0})^{d-o_1-o_2-1} \Big). 
\eeqan
After exchanging the order of the two summations, 
we obtain
\beqa
&& \myhb f_1(\bs; C) \lb
\myeq \sum_{d \in \N} p_1(d) \Big( \sum_{\bo \in \Z_+^2:
 	o_1+o_2 \leq d-1}
	s_1^{o_1} \ s_2^{o_2}
	\left[ \frac{d-1}{o_1:o_2} \right] \lb
&& \hspace{0.6in} \times (v^{C}_{1,1})^{o_1} 
	\ (v^{C}_{1,2})^{o_2}
		\ (v^{C}_{1,0})^{d-o_1-o_2-1} \Big) \lb
\myeq \sum_{d \in \N} p_1(d)  \big(
	\underbrace{s_1 \ v^{C}_{1,1}
	+ s_2 \ v^{C}_{1,2} + v^{C}_{1,0}}_\text{
				\large{$=: \psi_1(\bs, C)$}} \big)^{d-1}. 	
	\label{eq:f1C}
\eeqa
The second equality in (\ref{eq:f1C}) follows from 
the well-known equality 
\beqan
\sum_{ \sum_{k=1}^m n_k = N} 
	{{N}\choose{n_1, n_2, \ldots, n_m}} \prod_{k=1}^m
		\left(x_k \right)^{n_k} 
\myeq \left( \sum_{k=1}^m x_k \right)^N, 
\eeqan
where the summation on the left-hand side is over 
non-negative integers $n_k$, $k = 1, 2, \ldots, m$,
whose sum equals $N$. 
Following similar steps, we obtain
\beqa
&& \myhb f_2(\bs; C)
= \sum_{\bo \in \Z_+^2} h_2(\bo; C)
	\ s_1^{o_1} \ s_2^{o_2} \lb 
\myeq \sum_{\bo \in \Z_+^2} 
	s_1^{o_1} \ s_2^{o_2} 
		\sum_{d \geq o_1 + o_2 + 2} \Big( p_2(d)
	\left[ \frac{d-2}{o_1:o_2} \right] \lb
&& \myhf \times 
	(v^{C}_{2,1})^{o_1} \ (v^{C}_{2,2})^{o_2}
		\ (v^{C}_{2,0})^{d-o_1-o_2-2} \Big) \lb
\myeq \sum_{d=2}^\infty p_2(d) 
	\Big( \sum_{\bo \in \Z_+^2:o_1+o_2 \leq d-2}
		s_1^{o_1} \ s_2^{o_2}
	\left[ \frac{d-2}{o_1:o_2} \right] \lb
&& \hspace{0.6in} \times (v^{C}_{2,1})^{o_1} 
	\ (v^{C}_{2,2})^{o_2}
		\ (v^{C}_{2,0})^{d-o_1-o_2-2} \Big) \lb
\myeq \sum_{d=2}^\infty p_2(d)  \big(
	\underbrace{s_1 \ v^{C}_{2,1}
	+ s_2 \ v^{C}_{2,2} + v^{C}_{2,0}}_\text{
		\large{$=: \psi_2(\bs, C)$}} 
			\big)^{d-2}. 	
	\label{eq:f2C}
\eeqa

It is clear from (\ref{eq:f1C}) 
that $f_1(\bmu^1; C_1) = \mu^1_1 \leq f_1(\bmu^1; C_2)$ 
if and only if
$\psi_1(\bmu^1, C_1) \leq \psi_1(\bmu^1, C_2)$. 
Substituting the expressions $v^C_{1,1}
= (1 - C) q_1$, $v^C_{1,2} = C \cdot q_2$ and 
$v^C_{1,0} = 1 - v^C_{1,1} - v^C_{1,2}$ 
from Section~\ref{subsec:ChildDistribution},  
we obtain 
\beqan
\psi_1(\bs, C) 
\myeq s_1(1 - C) q_1 + s_2 C q_2 + 1 - (1-C) q_1 - C q_2.
\eeqan
Hence, after grouping only the terms containing $C$, 
we see that $\psi(\bmu^1, C_1) \leq \psi(\bmu^1, C_2)$ 
if and only if
\beqa
(1 - \mu^1_1) q_1 - (1 - \mu^1_2) q_2 \geq 0.
	\label{eq:appen1}
\eeqa

We now proceed to demonstrate that $\mu^1_2 
= f_2(\bmu^1; C_1) \leq f_2(\bmu^1; C_2)$. From 
(\ref{eq:f2C}), it is obvious that this claim is 
true if and only if $\psi_2(\bmu^1, C_1) \leq 
\psi_2(\bmu^1, C_2)$.  Recall $v^C_{2,1} = (1-C)^2 
q_1$, $v^C_{2,2} = C(2-C) q_2$ and $v^C_{2,0}
= 1 - (1-C)^2 q_1 - C(2-C) q_2$. Plugging in 
these expressions in $\psi_2(\bs; C)$, 
\beqan
\psi_2(\bs, C)
\myeq s_1 (1 - C)^2 q_1 + s_2 C (2-C) q_2 \lb
&& + 1 - (1-C)^2 q_1 - C(2-C) q_2 \lb  
\myeq 1 - (1-C)^2 q_1 (1 - s_1) 
	- C (2-C) q_2 (1 - s_2).  
\eeqan
Collecting only the terms with $C$ in $\psi_2(\bs, C)$,
we get
\beqan
C(2-C) \big( (1-s_1) q_1 - (1-s_2) q_2 \big). 
\eeqan 
Because $C(2-C)$ is strictly increasing over
(0, 1), 
$\psi_2(\bmu^1; C_1) \leq \psi_2(\bmu^1; C_2)$
if and only if
\beqa
(1-\mu^1_1) q_1 - (1-\mu^1_2) q_2 \geq 0,
	\label{eq:appen2} 
\eeqa
which is the same condition in (\ref{eq:appen1})
we obtained earlier.
This completes the proof of the first part of 
the theorem.

The second part of the theorem is a simple 
consequence of the observation that, 
from the proof of the first part, 
$\psi_t(\bmu^1, C_1) \geq \psi_t(\bmu^1, 
C_2)$, $t \in \cT$, if and only if the inequalities 
in (\ref{eq:appen1}) and (\ref{eq:appen2})
go the other way.

\section{Conclusion}
	\label{sec:Conclusion}

We examined the influence of clustering 
in interdependent systems on the 
likelihood of a random failure setting off a 
cascade of failures in a large system. 
We proposed a new model that captures the 
manner in which the triangles alter how a failure 
propagates from a failed system to neighboring 
systems. Utilizing the model, we derived a simple
condition that indicates how increasing clustering
changes the likelihood of experiencing cascading
failures in large systems. This condition also 
hints that, as the degree distribution of the
dependence graph becomes more concentrated, 
higher clustering will help curb the onset of 
widely spread failures by containing them 
to a small neighborhood around an initial
failure.

Our model assumes that the underlying dependence
graph is neutral and exhibits no degree correlations. 
While this helps us isolate the impact of clustering
on the robustness of the system, some real systems 
may display assortative/disassortative mixing. We
are currently working on generalizing the model to
incorporate assortativity, while retaining the separation
of the influence of clustering from that of assortativity. 
In addition, we are in the process of extending the
model to multiplex/multi-layer networks, in order
to investigate the effects of clustering when 
nodes are connected via different types of networks.

% that's all folks
\end{document}